\documentclass[12pt]{article}
\usepackage{amssymb, amsmath, amsthm}

\usepackage[paper=a4paper,left=25mm,right=25mm,top=25mm,bottom=25mm]{geometry}

\usepackage{bbm,mathtools,enumitem}

\usepackage[noadjust]{cite}

\usepackage{hyperref}

\usepackage{xcolor}
\hypersetup{
  colorlinks   = true, 
  urlcolor     = {blue!90!black}, 
  linkcolor    = {blue!90!black}, 
  citecolor   = {red!90!black} 
}

\allowdisplaybreaks[3]
\numberwithin{equation}{section}


\newtheorem{Theorem}{Theorem}[section]
\newtheorem{Lemma}[Theorem]{Lemma}

\newtheorem{Proposition}[Theorem]{Proposition}

\newcommand{\CB}{\mathcal B}
\newcommand{\CD}{\mathcal D}

\newcommand{\CG}{\mathcal G}

\newcommand{\CL}{\mathcal L}

\newcommand{\CR}{\mathcal R}
\newcommand{\CX}{\mathcal X}

\newcommand{\LK}{\mathbb K}
\newcommand{\LM}{\mathbb M}
\newcommand{\LN}{\mathbb N}

\newcommand{\LR}{\mathbb R}

\newcommand{\X}{X} 
\newcommand{\Y}{{{\LR^*_+}\times{\X}}} 
\newcommand{\IC}{\boldsymbol{C}} 
\DeclareMathOperator*{\esssup}{ess\,sup}
\DeclareMathOperator*{\essinf}{ess\,inf}
\let\hat=\widehat

\newcommand\bbs{B_\mathrm{bs}(\LK_0(\X))}

\newcommand\fin{\CL_\infty(\LK_0(\X))}

\newcommand\term[1]{\textbf{\textit{#1}}}


\usepackage[noblocks]{authblk}

\author[1]{Dmitri~Finkelshtein\thanks{{\tt d.l.finkelshtein@swansea.ac.uk}.}}
\affil[1]{Department of Mathematics,
Swansea University, Bay Campus, Fabian Way, Swansea SA1 8EN, U.K.}
\author[2]{Yuri~Kondratiev\thanks{{\tt kondrat@math.uni-bielefeld.de}.}} 
\author[2]{Peter Kuchling\thanks{{\tt pkuchlin@math.uni-bielefeld.de}.}} 
\affil[2]{Fakult\"{a}t f\"{u}r Mathematik, Universit\"{a}t Bielefeld, Postfach 110 131, 33501 Bielefeld, Germany}

\title{Markov dynamics on the cone of~discrete~Radon~measures}

\begin{document}

\maketitle

\section{Introduction}

Configuration spaces form an important and actively developing area in the infinite dimensional analysis.
The spaces not only contain rich mathematical structures which require non-trivial combination of continuous and combinatoric analysis, they also provide a natural mathematical framework for the applications to mathematical physics, biology, ecology, and beyond.

Spaces of discrete Radon measures (DRM) may be considered as generalizations of configuration spaces. Main peculiarity of a DRM is that its support is typically not a configuration (i.e. not a locally finite set). The latter changes drastically the techniques for the study of the spaces of DRM. 

Spaces of DRM have various motivations coming from mathematics and applications. In particular, random DRM appear in the context of the Skorokhod theorem \cite{Sk} in the theory of processes with independent 
increments. Next, in the representation theory of current groups, the role of measures on spaces of DRM was 
clarified in fundamental works by Gelfand, Graev, and Vershik; see \cite{KLV} for the development of this approach. Additionally, DRM gives a useful technical equipment in the study of several models in mathematical physics, biology, and ecology.

In the present paper, we start with a brief overview of the known facts about the spaces of DRM (Section 2). In \cite{K}, the concept of Plato subspaces of the spaces of marked configurations was introduced. Using this, one can define topological, differential and functional structures on spaces of DRM, as well as transfer the harmonic analysis considered in \cite{MR1914839} to the spaces of DRM. This allows us to extend the study of non-equilibrium dynamics, see e.g. \cite{MR2863863,MR2417815,MR2426716}, to the spaces of DRM. 

Namely, we consider three dynamics on the spaces of DRM: an analogue of the contact model (Section~3), where we generalise some results obtained in \cite{MR2426716} and provide new two-sides estimates for the correlation functions; and analogues of the Bolker--Dieckmann--Law--Pacala model (Section~4) and Glauber-type dynamics (Section~5) where we show how the results obtained previously for the configuration spaces (see e.g. \cite{MR2863863}) can be modified for the case of the spaces of DRM.  

\section{Framework}

\subsection*{Cone of discrete Radon measures} Let $\X$ be a locally compact Polish space, and let $\CB_c(\X)$ denote the family of all Borel sets from $\X$ with a compact closure. The \term{cone of nonnegative discrete Radon measures} on $\X$ is defined as follows:
\begin{equation*}
 \LK(\X):=\Bigl\{\eta=\sum_i s_i\delta_{x_i}\in\LM(\X) \Bigm\vert s_i\in (0,\infty), x_i\in\X\Bigr\}.
\end{equation*}
By convention, the zero measure $\eta=0$ is included in $\LK(\X)$. The \term{support} of an $\eta\in\LK(\X)$ is given by
\begin{equation*}
 \tau(\eta):=\bigl\{x\in\X: 0<\eta(\{x\})=:s_x(\eta)\bigr\},
\end{equation*}
and $\tau(0):=\emptyset$. 
If $\eta$ is fixed and $x\in\tau(\eta)$, we write $s_x:=s_x(\eta)$. Therefore, 
\[
\eta(\Lambda)=\sum_{x\in\tau(\eta)\cap\Lambda}s_x<\infty, \qquad \Lambda\in \CB_c(\X), \ \eta\in\LK(\X).
\]
We stress that, in general, the number of points $|\tau(\eta)|$ in the support of a measure $\eta\in\LK(\X)$ may be infnite. Let henceforce $|\cdot|$ denote the number of elements of a set.

For $\eta,\xi\in\LK(\X)$ we write $\xi\subset\eta$ if $\tau(\xi)\subset\tau(\eta)$ and $s_x(\xi)=s_x(\eta)$ for all $x\in\tau(\xi)$. If,~additionally, $|\tau(\xi)|<\infty$, we write $\xi\Subset\eta$.

We fix the \term{vague topology} on $\LM(\X)$, which is the coarsest topology such that the mappings
\begin{equation*}
 \eta\mapsto\langle f,\eta\rangle:=\sum_{x\in\tau(\eta)}s_x f(x).
\end{equation*}
are continuous for all continuous functions $f:\X\to\LR$ with compact support. We~endow $\LK(\X)$ with the corresponding subspace topology, and also let $\CB(\LK(\X))$ denote the corresponding Borel $\sigma$-algebra.

\subsection*{Configuration spaces} Let $Y$ be a locally compact Polish space. The \term{space of locally finite configurations} over $Y$ is defined as follows:
 \begin{equation*}
  \Gamma(Y)=\bigl\{\gamma\subset Y : |\gamma\cap \Delta|<\infty \text{ for all compact } \Delta\subset Y\bigr\}.
 \end{equation*}
 Then $\Gamma(Y)$ is naturally embedded into the space of Radon measures $\Gamma(Y)\subset\LM(Y)$; we endow it with the vague topology defined on $\LM(Y)$.  Let $\CB(\Gamma(Y))$ be the corresponding Borel $\sigma$-algebra. 

We denote $\LR_+^*:=(0,\infty)$ and consider $Y=\Y$. Let $\Gamma_\mathrm{p}(\Y)\subset\Gamma(\Y)$ denote the set of all \term{pinpointing configurations}; the latter means  that $\gamma\in \Gamma_\mathrm{p}(\Y)$ iff $(s_1,x),(s_2,x)\in\gamma$ for an $x\in\X$ implies $s_1=s_2$. 

For a pinpointing configuration $\gamma\in\Gamma_\mathrm{p}(Y)$, we introduce the local mass of a pre-compact set $\Lambda\in\CB_c(\X)$:
 \begin{equation*}
  \gamma(\Lambda)=\int_{\LR^*_+\times\X}s\mathbbm{1}_\Lambda(x)\ d\gamma(s,x)=\sum_{(s,x)\in\gamma}s\mathbbm{1}_\Lambda(x)\in[0,\infty].
 \end{equation*}
Finally, we define the space of pinpointing configurations with finite local mass:
\begin{equation*}
 \Pi(\Y):=\bigl\{\gamma\in\Gamma_\mathrm{p}(\Y) : \gamma(\Lambda)<\infty \text{ for all } \Lambda\in\CB_c(\X)\bigr\}.
\end{equation*}
We endow $\Pi(\Y)$ with the subspace topology coming from $\Gamma(\Y)$, and one can consider the corresponding (trace) Borel $\sigma$-algebra.

The mapping $\CR:\Pi(\Y)\to\LK(\X)$ given by
\begin{equation}\label{defR}
\gamma=\sum_{(s,x)\in\gamma}\delta_{(s,x)}\longmapsto \CR \gamma:=\sum_{(s,x)\in\gamma}s\delta_x
\end{equation}
provides a natural bijection. It can be shown that both $\CR$ and $\CR^{-1}$ are measurable with respect to the Borel $\sigma$-algebras constructed above, i.e. 
$\CB(\Pi(\Y))$ and $\CB(\LK(\X))$ are $\sigma$-isomorphic, see \cite{MR3041709}.

\subsection*{Discrete measures with finite support}

We consider the subcone of all discrete nonnegative Radon measures with finite support:
\begin{gather*}
 \LK_0(\X):=\bigl\{\eta\in\LK(\X): |\tau(\eta)|<\infty\bigr\}=\bigsqcup_{n=0}^\infty\LK_0^{(n)}(\X),\\
\shortintertext{where}
 \LK_0^{(n)}(\X):=\left\{\eta\in\LK_0(\X): |\tau(\eta)|=n\right\},\ n\in\LN;
 \qquad \LK_0^{(0)}(\X):=\{0\}.
\end{gather*}
The mapping $\CR$, given by \eqref{defR}, provides provides a bijection between $\LK_0(\X)$ and the set $\Gamma_{0,\mathrm{p}}(\Y)$ of pinpointing finite configurations on $\Y$. We define the Borel $\sigma$-algebra on  $\LK_0(\X)$ as the smallest $\sigma$-algebra which makes this mapping $\CR$ measurable. 

Any measurable function $G:\LK_0(\X)\to\LR$ can be identified with the sequence of symmetric functions on $(\Y)^n$, $n\in\LN$, through the equalities:
 \begin{equation*}
 G^{(n)}(s_1,x_1,\dotsc,s_n,x_n):=G\Bigl(\sum_{i=1}^n s_i\delta_{x_i}\Bigr), \qquad
 \sum_{i=1}^n s_i\delta_{x_i}\in \LK_0^{(n)}(\X), \ n\in\LN.
\end{equation*}
We set also $G^{(0)}:=G(0)\in\LR$.
 
A set $A\subset\LK_0(\X)$ is called \term{bounded} if there exist $\Lambda\in\CB_c(\X)$, $N\in\LN$, and a segment $I:=[a,b]\subset\LR_+^*$ such that, for all $\eta\in A$,
\[
\tau(\eta)\subset\Lambda, \qquad |\tau(\eta)|\leq N, \qquad
s_x \in I \text{ for all } x\in\tau(\eta).
\]
The family of all bounded measurable subsets of $\LK_0(\X)$ is denoted by $\CB_\mathrm{b}(\LK_0(\X))$. A measure $\rho$ on $\LK_0(\X)$ is called \term{locally finite} if $\rho(A)<\infty$ for all $A\in \CB_\mathrm{b}(\LK_0(\X))$. 

An example of a locally finite measure on $\LK_0(\X)$ is the \term{Lebesgue--Poisson measure} $\lambda_{\nu\otimes\sigma}$ with the intensity measure $\nu\otimes\sigma$, where $\nu$ and $\sigma$ are non-atomic Radon measures on  $\LR^*_+$ and $\X$, respectively, and  $\nu$ has a finite first moment:
\begin{equation*}
  \int_{\LR_+^*}s\nu(ds)<\infty.
 \end{equation*}
The Lebesgue--Poisson measure $\lambda_{\nu\otimes\sigma}$ is then defined through the equality
 \begin{align*}
  \int_{\LK_0(\X)}&G(\eta)\lambda_{\nu\otimes\sigma}(d\eta)=
  \\
  &=G(0)+\sum_{n=1}^\infty\frac{1}{n!}\int_{(\LR^*_+\times\LR^d)^n}G^{(n)}(s_1,x_1,\dotsc,s_n,x_n)\nu(ds_1)\dotso\nu(ds_n)\sigma(dx_1)\dotso\sigma(dx_n),
 \end{align*}
 which should hold for any $G:\LK_0(\X)\to\LR_+$. 

We also consider a special case of the measure $\nu=\nu_\theta$, where
\begin{equation}\label{gamma}
 \nu_\theta(ds)=\frac{\theta}{s}e^{-s}ds
\end{equation}
for some $\theta>0$. For a fixed non-atomic Radon measure $\sigma$ on $\X$, we then denote
\[
  \lambda_\theta:=\lambda_{\nu_\theta\otimes\sigma}.
\]

A function $G:\LK_0(\X)\to\LR$ is said to be a bounded function with \term{bounded support} if $|G(\eta)|\leq C \mathbbm{1}_A(\eta)$, $\eta\in\LK_0(\X)$, for some $C>0$, $A\in \CB_\mathrm{b}(\LK_0(\X))$. The set of all bounded functions on $\LK_0(\X)$ with bounded support is denoted by $B_\mathrm{bs}(\LK_0(\X))$.
Clearly, for any locally finite measure $\rho$ on $\LK_0(\X)$,
\[
\int_{\LK_0(\X)} |G(\eta)|\rho(d\eta)<\infty, \qquad
G\in B_\mathrm{bs}(\LK_0(\X)).
\]
Note that $B_\mathrm{bs}(\LK_0(\X))$ is dense in $L^1(\LK_0(\X),\lambda_{\nu\otimes\sigma})$, where $\nu$ and $\sigma$ are as above.

We will need the following identity.
\begin{Lemma}[Minlos lemma]\label{Minlos}
 Let $\lambda_{\nu\otimes\sigma}$ be defined as the above.
 \begin{enumerate}
  \item Let $G:\LK_0(\X)\to\LR$, $H:(\LK_0(\X))^2\to\LR$. Then
  \begin{align*}
   \int_{\LK_0(\X)}&\int_{\LK_0(\X)}G(\xi_1+\xi_2)H(\xi_1,\xi_2) \lambda_{\nu\otimes\sigma} (d\xi_1)\lambda_{\nu\otimes\sigma} (d\xi_2)
   \\
   &=\int_{\LK_0(\X)}G(\eta)\sum_{\xi\subset\eta}H(\xi,\eta-\xi)\lambda_{\nu\otimes\sigma}(d\eta).
  \end{align*}
  \item Let $H:\LK_0(\X)\times\LR_+^*\times\LR^d\to\LR$. Then
  \begin{align*}
   \int_{\LK_0(\X)}&\sum_{x\in\tau(\eta)}H(\eta,s_x,x)\lambda_{\nu\otimes\sigma}(d\eta)
   \\
   &=\int_{\LK_0(\X)}\int_{\LR_+^*\times\LR^d} H(\eta+s\delta_x,s,x)\nu(ds) \sigma(dx)\lambda_{\nu\otimes\sigma}(d\eta),,
  \end{align*}
 \end{enumerate}
provided, at least one side of the equality exists.
\end{Lemma}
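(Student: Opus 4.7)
My plan is to prove both identities by unfolding the definition of $\lambda_{\nu\otimes\sigma}$ as a series over the strata $\LK_0^{(n)}(\X)$ and matching the resulting symmetric integrals via combinatorial reindexing. An equally valid (and perhaps more elegant) route is to transport the identities along the $\sigma$-isomorphism $\CR$ to the finite pinpointing configuration space $\Gamma_{0,\mathrm{p}}(\Y)$ with its classical Lebesgue--Poisson measure: then (i) becomes the standard ``sum over subconfigurations'' identity and (ii) becomes the Mecke identity for Poisson point processes, both classical. I will outline the direct approach.

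For part (i), expanding the left-hand side by applying the definition of $\lambda_{\nu\otimes\sigma}$ twice gives
\[
\sum_{n,m\geq 0}\frac{1}{n!\,m!}\int_{(\Y)^{n+m}}G^{(n+m)}(s_1,x_1,\dotsc,s_{n+m},x_{n+m})\,H\bigl(\xi_1^{(n)},\xi_2^{(m)}\bigr)\prod_{i=1}^{n+m}\nu(ds_i)\sigma(dx_i),
\]
where $\xi_1^{(n)}=\sum_{i=1}^n s_i\delta_{x_i}$, $\xi_2^{(m)}=\sum_{i=n+1}^{n+m}s_i\delta_{x_i}$, and the non-atomicity of $\sigma$ rules out repeated atoms off a null set. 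Reindexing by $k=n+m$ turns $\tfrac{1}{n!\,m!}$ into $\tfrac{1}{k!}\binom{k}{n}$; by symmetry of $\nu^{\otimes k}\otimes\sigma^{\otimes k}$ the factor $\binom{k}{n}$ precisely counts the subsets $A\subset\{1,\dotsc,k\}$ of size $n$, so the inner sum recombines as $\sum_{A\subset\{1,\dotsc,k\}}H(\xi_A,\xi_{A^c})$ with $\xi_A:=\sum_{i\in A}s_i\delta_{x_i}$. For $\eta=\sum_{i=1}^k s_i\delta_{x_i}$ this is exactly $\sum_{\xi\subset\eta}H(\xi,\eta-\xi)$, yielding the right-hand side. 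Part (ii) is analogous: writing
\[
\int_{\LK_0(\X)}\sum_{x\in\tau(\eta)}H(\eta,s_x,x)\,\lambda_{\nu\otimes\sigma}(d\eta)=\sum_{n\geq 1}\frac{1}{n!}\int_{(\Y)^n}\sum_{i=1}^n H\Bigl(\textstyle\sum_{j=1}^n s_j\delta_{x_j},\,s_i,\,x_i\Bigr)\prod_{j=1}^n\nu(ds_j)\sigma(dx_j),
\]
symmetry equalises the $n$ inner summands, the resulting factor $n$ cancels $n!$ down to $(n-1)!$, and after setting $n=m+1$ and relabelling the distinguished atom as the external $(s,x)$ one recognises the expansion of the right-hand side.

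The only real technical obstacle is to justify these manipulations under the hypothesis that at least one side exists. For nonnegative $G$ and $H$ every exchange of sums and integrals is legitimate by Tonelli, giving the identity in $[0,\infty]$; the signed case then follows by splitting into positive and negative parts and importing absolute integrability from the finite side to the other. Measurability of the map $\eta\mapsto\sum_{\xi\subset\eta}H(\xi,\eta-\xi)$, which is needed merely to form the right-hand side of (i), reduces on each stratum to measurability of a finite symmetric sum on $(\Y)^k$ and is routine via $\CR$.
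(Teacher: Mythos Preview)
The paper does not actually prove this lemma: it is stated as the ``Minlos lemma'' and used throughout without argument, the result being classical for Lebesgue--Poisson measures on configuration spaces and carried to $\LK_0(\X)$ via the bijection $\CR$. Your proof is correct and is exactly the standard one: unfolding the stratified definition of $\lambda_{\nu\otimes\sigma}$, using the combinatorial identity $\sum_{n+m=k}\frac{1}{n!\,m!}=\frac{1}{k!}\sum_{A\subset\{1,\dots,k\}}$ for part~(i) and the symmetry reduction $n/n!=1/(n-1)!$ for part~(ii), with Tonelli plus positive/negative decomposition to handle the existence hypothesis. There is nothing to compare against and nothing missing.
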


\subsection*{Harmonic analysis on the cone}

For any $G\in B_\mathrm{bs}(\LK_0(\X))$, we define $KG:\LK(\X)\to\LR$ by (cf. \cite{MR1914839})
\begin{equation}\label{defKtr}
 (KG)(\eta):=\sum_{\xi\Subset\eta}G(\xi).
\end{equation}

\begin{Proposition}[{see \cite{cone1,kuchling_phd}}]
For any $G\in B_\mathrm{bs}(\LK_0(\X))$, there exist $C>0$, $\Lambda\in\CB_c(\X)$, $N\in\LN$, and a segment $I=[a,b]\subset\LR_+^*$ such that, for each $\eta\in\LK(\X)$,
\begin{align*}
(KG)(\eta) &= (KG)\Bigl( \sum_{x\in\tau(\eta)\cap\Lambda}\mathbbm{1}_I(s_x)s_x\Bigr),\\
\bigl\lvert (KG)(\eta) \bigr\rvert & \leq C \bigl(1+|\tau(\eta)\cap\Lambda|\bigr)^N.
\end{align*}
\end{Proposition}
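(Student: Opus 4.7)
The plan is to translate the hypothesis that $G$ has bounded support into restrictions on which $\xi\Subset\eta$ can contribute nontrivially to the sum in \eqref{defKtr}. By the definition of $B_\mathrm{bs}(\LK_0(\X))$, there exist $C>0$ and $A\in\CB_\mathrm{b}(\LK_0(\X))$ with $|G|\le C\mathbbm{1}_A$. I will take the constants $\Lambda\in\CB_c(\X)$, $N\in\LN$, $I=[a,b]\subset\LR^*_+$ in the statement to be precisely those witnessing the boundedness of $A$, so that $\xi\in A$ forces $\tau(\xi)\subset\Lambda$, $|\tau(\xi)|\le N$, and $s_x(\xi)\in I$ for every $x\in\tau(\xi)$.

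For the first identity, the key observation is that whenever $\xi\Subset\eta$ one has $s_x(\xi)=s_x(\eta)$ on $\tau(\xi)$ by the very definition of $\Subset$; if moreover $G(\xi)\ne 0$, then $\xi\in A$, and combining the two facts yields $\tau(\xi)\subset\tau(\eta)\cap\Lambda$ and $s_x(\eta)\in I$ for every $x\in\tau(\xi)$. Hence the nonzero summands in $(KG)(\eta)$ are exactly those indexed by $\xi\Subset\tilde\eta$, where $\tilde\eta:=\sum_{x\in\tau(\eta)\cap\Lambda}\mathbbm{1}_I(s_x)s_x\delta_x$ is the truncated discrete measure on the right-hand side of the asserted equality. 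A brief preliminary check is that $\tilde\eta\in\LK_0(\X)$: since $\eta(\Lambda)<\infty$ and every surviving weight is at least $a>0$, only finitely many atoms of $\eta$ in $\Lambda$ can have $s_x\in I$, so $\tilde\eta$ does have finite support. The identity $(KG)(\eta)=(KG)(\tilde\eta)$ then follows directly from \eqref{defKtr}.

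For the polynomial estimate, set $m:=|\tau(\eta)\cap\Lambda|$; if $m=\infty$ the bound is vacuous, so assume $m<\infty$. The number of $\xi\Subset\eta$ contributing to $(KG)(\eta)$ is at most the number of subsets of $\tau(\eta)\cap\Lambda$ of cardinality at most $N$, namely $\sum_{k=0}^N\binom{m}{k}$, which is dominated by a constant depending only on $N$ times $(1+m)^N$. Multiplying by the uniform bound $C$ on $|G|$ and absorbing the combinatorial constant into $C$ yields the claimed inequality.

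The main subtlety, as far as I can see, is the check that the truncated measure $\tilde\eta$ actually belongs to $\LK_0(\X)$; this is what makes the localisation to $\Lambda$ in the first identity substantive in the case when $\tau(\eta)\cap\Lambda$ is infinite. The remainder is essentially bookkeeping built on the definition of bounded support and on the pointwise equality of weights forced by $\xi\Subset\eta$.
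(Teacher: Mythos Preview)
Your argument is correct. The paper itself does not supply a proof of this proposition; it is stated with a reference to \cite{cone1,kuchling_phd}, so there is no in-paper proof to compare against. What you have written is the natural argument one would expect to find in those references: extract $(\Lambda,N,I)$ from the bounded set $A$ supporting $G$, observe that any $\xi\Subset\eta$ with $G(\xi)\neq 0$ must in fact satisfy $\xi\Subset\tilde\eta$ for the truncated measure $\tilde\eta$, verify that $\tilde\eta\in\LK_0(\X)$ via the lower bound $s_x\ge a>0$ together with $\eta(\Lambda)<\infty$, and then count the surviving summands by $\sum_{k=0}^N\binom{m}{k}\le (N+1)(1+m)^N$ with $m=|\tau(\eta)\cap\Lambda|$. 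All steps are sound, including your handling of the case $m=\infty$, where the asserted bound is vacuously true while the first identity still guarantees that $(KG)(\eta)$ is a finite sum.
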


Note that \eqref{defKtr} can be also defined pointwise on a wider class of functions (see \cite{cone1,kuchling_phd} for details). In particular, for the Lebesgue-Poisson exponents
\[
  e_\lambda(f,\eta):=\prod_{y\in\tau(\eta)}f(s_y,y), \quad \eta\in\LK_0(\X), \qquad e_\lambda(f,0):=1,
\]
one has that
\[
Ke_\lambda(f,\eta)=\prod_{y\in\tau(\eta)}(1+f(s_y,y)),\qquad \eta\in\LK(\X),
\]
provided that e.g. $|f(s,y)|\leq C\,s \,\mathbbm{1}_\Lambda(y) $ for $(s,y)\in\Y$, where $C>0$, $\Lambda\in\CB_c(\X)$.
 
Note also that, for any $f\in L^1(\Y,d\nu d\sigma)$,
\begin{equation}\label{eq:intexp}
  \int_{\LK_0(\X)}e_\lambda(f,\eta)\lambda_{\nu\otimes\sigma}(d\eta)=\exp\left(\int_{\Y}f(s,x)\nu (ds)\sigma(dx)\right).
 \end{equation}

For measurable $G_1,G_2:\LK_0(\X)\to\LR$, we define their $\star$-convolution as follows:
\begin{equation*}
 (G_1\star G_2)(\eta)=\sum_{\substack{\xi_1+\xi_2+\xi_3=\eta:\\[1mm]
 \tau(\xi_i)\cap\tau(\xi_j)=\emptyset}}G_1(\xi_1+\xi_2)G_2(\xi_2+\xi_3).
\end{equation*}
Then, for any $G_1,G_2\in B_\mathrm{bs}(\LK_0(\X))$,
\begin{equation*}
 K(G_1\star G_2)=KG_1\cdot KG_2.
\end{equation*}
 
Let $\mu$ be a probability measure on the space $(\LK(\X),\CB(\LK(\X)))$ such that
\begin{equation*}
 \int_{\LK(\X)}|\eta(\Lambda)|^N\mu(d\eta)<\infty 
\end{equation*}
for any $\Lambda\in\CB_c(\X)$ and $N\in\LN$. Then $\mu$ is said to have \term{finite local moments} of all orders. The space of all such measures is denoted by $\mathcal{M}^1_\mathrm{fm}(\LK(\X))$.  In particular,
\[
  K(\bbs) \subset L^1(\LK(\X),\mu), \qquad \mu\in \mathcal{M}^1_\mathrm{fm}(\LK(\X)).
\]
The corresponding \term{correlation measure} $\rho_\mu$  on $(\LK_0(\X),\CB(\LK_0(\X)))$ is then defined by the relation
\begin{equation*}
 \rho_\mu(A):=\int_{\LK(\X)} (K\mathbbm{1}_A)(\eta) \mu(d\eta), \qquad A\in\CB_b(\LK_0(\X)).
\end{equation*}

\begin{Proposition}[{see \cite{cone1,kuchling_phd}}]\label{propK}
Let $\mu\in\mathcal{M}^1_\mathrm{fm}(\LK(\X))$. Then
\begin{enumerate}
  \item The corresponding correlation measure $\rho_\mu$ is locally finite. 
  \item For any $G\in L^1(\LK_0(\X),\rho_\mu)$, the sum in \eqref{defKtr} 
converges $\mu$-almost surely, and
  \item $KG\in L^1(\LK(\X),\mu)$ with
\begin{gather}
\int_{\LK(\X)}KG(\eta)\mu(d\eta)=\int_{\LK_0(\X)}G(\eta)\rho_\mu(d\eta),\label{eqsfa}\\
 \|KG\|_{L^1(\mu)}\leq\|G\|_{L^1(\rho_\mu)}.\notag
\end{gather}
\end{enumerate}
\end{Proposition}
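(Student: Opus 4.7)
The plan is to handle the three items in sequence, with a single polynomial estimate for $K\1_A$ on bounded $A$ underpinning everything that follows.

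For (1), I would fix $A\in\CB_\mathrm{b}(\LK_0(\X))$ with associated parameters $\Lambda\in\CB_c(\X)$, $N\in\LN$ and $I=[a,b]\subset\LR_+^*$, and observe that every sub-measure $\xi\Subset\eta$ belonging to $A$ corresponds to choosing at most $N$ points from
\[
M(\eta):=\{x\in\tau(\eta)\cap\Lambda : s_x(\eta)\in I\}.
\]
Hence $(K\1_A)(\eta)\le\sum_{k=0}^{N}\binom{|M(\eta)|}{k}\le C_N\bigl(1+|M(\eta)|\bigr)^N$ for a constant depending only on $N$. Because each $x\in M(\eta)$ contributes mass at least $a>0$ to $\eta(\Lambda)$, one has $|M(\eta)|\le\eta(\Lambda)/a$, and therefore
\[
\rho_\mu(A)=\int_{\LK(\X)}(K\1_A)(\eta)\,\mu(d\eta)\le C_N\int_{\LK(\X)}\bigl(1+\eta(\Lambda)/a\bigr)^N\mu(d\eta)<\infty
\]
by the finite local moments assumption on $\mu$.

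For (2) and the identity in (3), I would first note that \eqref{eqsfa} is the definition of $\rho_\mu$ when $G=\1_A$ with $A$ bounded; by linearity it extends to nonnegative simple functions supported on bounded sets, and I would then exhaust $\LK_0(\X)$ by an increasing sequence of bounded sets to approximate any nonnegative measurable $G$ from below by such simple functions $G_n\uparrow G$. Monotone convergence applies on both sides simultaneously: on the right for $\rho_\mu$ directly, and on the left because $(KG_n)(\eta)\uparrow(KG)(\eta)$ termwise in \eqref{defKtr}. If $\|G\|_{L^1(\rho_\mu)}<\infty$ with $G\ge 0$, the common value of both sides is finite, which forces $KG<\infty$ $\mu$-a.s. and $KG\in L^1(\mu)$. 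For arbitrary $G\in L^1(\rho_\mu)$, decomposing $G=G^+-G^-$ and applying the nonnegative case to each piece gives $\mu$-a.s.\ absolute convergence of the series in \eqref{defKtr}, membership of $KG$ in $L^1(\mu)$, and \eqref{eqsfa} by linearity. The norm inequality is then immediate from the pointwise bound $|KG|\le K|G|$ combined with \eqref{eqsfa} applied to $|G|$.

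The only nontrivial input is the estimate in Step~1; the lower bound $a>0$ on masses inherent to the definition of a bounded set is essential, since otherwise arbitrarily many small-mass points could accumulate in $\tau(\eta)\cap\Lambda$ and $(K\1_A)(\eta)$ would not be controlled by any local mass moment of $\eta$. Everything past that point is the standard monotone-convergence-plus-splitting routine, mirroring the configuration-space argument from \cite{MR1914839}.
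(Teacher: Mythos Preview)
The paper does not give its own proof of this proposition; it is stated with a reference to \cite{cone1,kuchling_phd} and no argument is reproduced. So there is nothing in the paper to compare against line by line.

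That said, your proposal is correct and is the natural route. The polynomial estimate you derive in step~(1) is exactly the content of the preceding (also cited, unproved) proposition in the paper, and your version is in fact sharper: you bound $(K\1_A)(\eta)$ in terms of $|M(\eta)|$ rather than $|\tau(\eta)\cap\Lambda|$, which is what one actually needs, since the latter need not be finite. Your observation that the lower mass cutoff $a>0$ is what converts a count of support points into a local mass moment is precisely the point, and the subsequent monotone-convergence-and-splitting argument is standard and complete as sketched. This is almost certainly how the cited references proceed as well.
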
 

Let $\nu,\sigma$ be as above. Consider the Poisson measure $\pi_{\nu\otimes\sigma}$ on $\Gamma(\Y)$ with the intensity measure $\nu\otimes\sigma$ on $\Y$, then $\pi_{\nu\otimes\sigma}(\Pi(\Y))=1$ (see \cite{cone1,kuchling_phd} for details). Hence, we may view $\pi_{\nu\otimes\sigma}$ as a probability measure on $\Pi(\Y)$, and consider the corresponding push-forward measure on $\LK(\X)$ under the mapping $\CR$. This measure belongs to $\mathcal{M}^1_\mathrm{fm}(\LK(\X))$, and the corresponding correlation measure is just $\lambda_{\nu\otimes \sigma}$.

In the special case $\nu=\nu_\theta$ given through \eqref{gamma}, $\theta>0$, the corresponding push-forward measure on $\LK(\X)$ is called the \term{Gamma measure} $\CG_\theta$ with  the intensity $\theta>0$.

Let $\mu\in\mathcal{M}^1_\mathrm{fm}(\LK(\X))$ and $\rho_\mu$ be the corresponding correlation measure. A function $k_\mu:\LK_0(\X)\to\LR$ is called the correlation function of $\mu$ if it is the density of the correlation measure with respect to the Lebesgue-Poisson measure $\lambda_{\nu\otimes\sigma}$, i.e. if
\begin{equation*}
 \rho(d\eta)=k_\mu(\eta)\lambda_{\nu\otimes\sigma}(d\eta).
\end{equation*}
For sufficient conditions for the existence of the correlation function, see \cite{cone1,kuchling_phd}. 

\subsection*{Statistical dynamics}
We are going to describe evolutions of measures $\mu_0\mapsto\mu_t$ in the space $\mathcal{M}^1_\mathrm{fm}(\LK(\X))$ through a (formal) Markov generator $L$. We assume that $L$ is defined on a linear set $\CD\subset K(\bbs)$ such that $LF\in L^1(\LK(\X),\mu_t)$, $t\geq0$,  for all $F\in\CD$. Then we define the evolution of measures through the equality
\begin{equation}\label{eqDual}
  \frac{d}{dt}\int_{\LK(\X)} F(\eta) \mu_t(d\eta)=
  \int_{\LK(\X)} (LF)(\eta) \mu_t(d\eta),
\end{equation}
for all $t\geq0$, $F\in\CD$
(recall that, by Proposition \ref{propK}, $F\in L^1(\LK(\X),\mu_t)$ for $t\geq0$).
 
Rewriting $F=KG$, $G\in\bbs$, and defining $\hat{L}G$ through the identity
\begin{equation*}
K \hat{L}G = L KG, \qquad G\in\bbs,
\end{equation*}
one can rewrite \eqref{eqDual}, by using\eqref{eqsfa}, as follows:
\begin{equation}\label{eqDuala}
  \frac{d}{dt}\int_{\LK_0(\X)} G(\eta) \rho_{\mu_t}(d\eta)=
  \int_{\LK_0(\X)} (\hat{L}G)(\eta) \rho_{\mu_t}(d\eta)
\end{equation}
for all $t>0$, $G\in\bbs$. Here
\[
  \hat{L}G =K^{-1}LKG, \qquad G\in\bbs,
\]
where
\[
  (K^{-1}F)(\eta) := \sum_{\xi\subset\eta}(-1)^{|\tau(\eta)|-|\tau(\xi)|}F(\xi), \qquad \eta\in\LK_0(\X).
\]

We will restrict our attention to the dynamics of correlation measures which have correlation functions: $\rho_{\mu_t}(d\eta)=k_t(\eta)d\lambda_{\nu\otimes\sigma}(\eta)$, assuming $k_0$ be given. Then \eqref{eqDuala} can be rewritten as follows:
\begin{equation}\label{eqDualb}
  \frac{d}{dt}\int_{\LK_0(\X)} G(\eta) k_t(\eta)d\lambda_{\nu\otimes\sigma}(\eta)=
  \int_{\LK_0(\X)} (\hat{L}G)(\eta)k_t(\eta)d\lambda_{\nu\otimes\sigma}(\eta)
\end{equation}
for all $t>0$, $G\in\bbs$.

Let $L^\triangle$ denote the dual operator to $\hat{L}$, i.e.
\begin{equation}\label{eq:duality}
\int_{\LK_0(\X)} (\hat{L}G)(\eta)k(\eta)d\lambda_{\nu\otimes\sigma}(\eta)
  =\int_{\LK_0(\X)} G(\eta)(L^\triangle k)(\eta)d\lambda_{\nu\otimes\sigma}(\eta)
\end{equation}
for all $G,k:\LK_0(\X)\to\LR$, such that both sides of the latter equality are finite. Then one can rewrite \eqref{eqDualb} as follows
\begin{equation}\label{eqDualc}
  \frac{d}{dt}\int_{\LK_0(\X)} G(\eta) k_t(\eta)d\lambda_{\nu\otimes\sigma}(\eta)=
  \int_{\LK_0(\X)} G(\eta)(L^\triangle k_t)(\eta)d\lambda_{\nu\otimes\sigma}(\eta)
\end{equation}
for all $t>0$, $G\in\bbs$. The latter weak-type equation defines hence the evolution of the correlation functions generated by the Markov operator $L$. We can consider also its strong form:
\begin{equation}\label{qfpe}
  \frac{\partial}{\partial t}k_t(\eta)=L^\triangle k_t(\eta),\qquad t>0.
\end{equation}
considered on a suitable class of correlation functions.

\section{Contact model}
Let $\nu$ and $\sigma$ be non-atomic Radon measures on $\LR_+^*$ and $\X$, respectively. We define
\begin{align*}
 (LF)(\eta)&=\sum_{x\in\tau(\eta)}m(s_x)[F(\eta-s_x\delta_x)-F(\eta)]
 \\
 &\quad+\sum_{x\in\tau(\eta)}\int_{\Y}q(s_x,s)a(x-y)[F(\eta+s\delta_y)-F(\eta)]\nu(ds)\sigma(dy)
\end{align*}
for $F\in K(\bbs)$, cf. \cite{MR2426716}. Here $a:\X\to[0,\infty)$, $m:\LR_+^*
\to[0,\infty)$, $q:\LR_+^*\times\LR_+^*\to[0,\infty)$ are such that
\begin{equation}\label{cm:cond}
\begin{gathered}
 a(-x)=a(x), \quad x\in\X, \qquad a\in L^1(\X,d\sigma)\cap L^\infty(\X,d\sigma),\qquad m\in L^\infty(\LR_+^*,d\nu),\\
 q\in L^\infty(\LR_+^*\times\LR_+^*,d\nu\,d\nu), \qquad
 \int q(s',\cdot)\nu(ds') \in L^\infty(\LR_+^*,d\nu).
\end{gathered}\end{equation}

\begin{Proposition}
For any $G\in B_\mathrm{bs}(\LK_0(\X))$, 
$\hat{L}G:=K^{-1}LKG$ satisfies
 \begin{align}\label{hatL}
  (\hat{L}G)(\eta)&=-\sum_{x\in\tau(\eta)}m(s_x)G(\eta)
  \\
  &\quad+\sum_{x\in\tau(\eta)}\int_{\Y}q(s_x,s)a(x-y)[G(\eta-s_x\delta_x+s\delta_y)+G(\eta+s\delta_y)]\nu (ds)\sigma(dy). \notag
 \end{align}
\end{Proposition}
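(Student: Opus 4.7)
The plan is to verify the identity $K\hat L G = LKG$ pointwise, and invoke the injectivity of $K$ (equivalently, the explicit $K^{-1}$ formula recalled just before the statement) to conclude. Both sides split cleanly into a death term (involving $m$) and a birth/branching term (involving $q,a$), which I treat separately. The combinatorial engine is the pair of elementary identities
\begin{align*}
 KG(\eta - s_x\delta_x) - KG(\eta) &= -\sum_{\substack{\xi\Subset\eta \\ x\in\tau(\xi)}} G(\xi), \qquad x\in\tau(\eta), \\
 KG(\eta + s\delta_y) - KG(\eta) &= \sum_{\xi\Subset\eta} G(\xi + s\delta_y), \qquad y\notin\tau(\eta),
\end{align*}
obtained by splitting $\{\xi\Subset\eta\pm s\delta_{\cdot}\}$ according to whether the distinguished point belongs to $\tau(\xi)$; the restriction $y\notin\tau(\eta)$ in the second identity causes no harm since $\sigma$ is non-atomic.

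For the death part, multiplying the first identity by $m(s_x)$ and summing over $x\in\tau(\eta)$ yields, after interchanging the two sums, $-\sum_{\xi\Subset\eta}\bigl(\sum_{x\in\tau(\xi)} m(s_x)\bigr)G(\xi)$, which is precisely $K$ applied to the first term on the right-hand side of \eqref{hatL}.

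For the birth/branching part, I multiply the second identity by $q(s_x,s)\,a(x-y)$, integrate in $(s,y)$ against $\nu\otimes\sigma$, and sum over $x\in\tau(\eta)$; the interchanges of sums and integrals are justified by the bounded-support hypothesis on $G$ together with the $L^\infty$/$L^1$ bounds in \eqref{cm:cond}. After pulling the $\xi$-sum outside, I split $\tau(\eta) = \tau(\xi)\sqcup(\tau(\eta)\setminus\tau(\xi))$: the subsum with $x\in\tau(\xi)$ directly reproduces the $G(\xi+s\delta_y)$ summand of \eqref{hatL}, while in the complementary subsum the bijective re-indexing
\[
 \{(\xi,x):\xi\Subset\eta,\ x\in\tau(\eta)\setminus\tau(\xi)\} \;\longleftrightarrow\; \{(\xi',x):\xi'\Subset\eta,\ x\in\tau(\xi')\}, \qquad \xi'=\xi+s_x\delta_x,
\]
turns $G(\xi+s\delta_y)$ into $G(\xi'-s_x\delta_x+s\delta_y)$ and thus produces the remaining summand of \eqref{hatL}. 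The main subtlety is precisely this re-indexing: one must notice that the promoted point keeps its weight $s_x = s_x(\eta) = s_x(\xi')$, so that the kernel $q(s_x,s)$ is unchanged and the second argument of $G$ comes out exactly as in \eqref{hatL}. Collecting both contributions under the single sum $\sum_{\xi\Subset\eta}$ then completes the verification.
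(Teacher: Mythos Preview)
Your proof is correct and follows essentially the same route as the paper's: both compute $LKG$ by applying the two difference identities for $KG(\eta\pm\cdot)$, then reduce the result to the form $\sum_{\xi\Subset\eta}(\cdots)(\xi)$ via the same split of the outer $x$-sum according to whether $x\in\tau(\xi)$ and the same re-indexing $\xi'=\xi+s_x\delta_x$. Your explicit remark that $y\notin\tau(\eta)$ holds for $\sigma$-a.e.\ $y$ (since $\sigma$ is non-atomic) is a nice clarification that the paper leaves implicit.
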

 \begin{proof}
Firstly, we note that, for any $G \in B_\mathrm{bs}(\LK_0(\X))$ and $F:=KG$,
\begin{align*}
F(\eta-s_x\delta_x)-F(\eta)&=-K(G(\cdot+s_x\delta_x))(\eta-s_x\delta_x)
 \\
 F(\eta+s_x\delta_x)-F(\eta)&=K(G(\cdot+s_x\delta_x))(\eta).
\end{align*} 
Then
  \begin{align*}
&\quad \sum_{x\in\tau(\eta)}s_x\left[F(\eta-s_x\delta_x)-F(\eta)\right]
=-\sum_{x\in\tau(\eta)}s_x K(G(\cdot+s_x\delta_x))(\eta-s_x\delta_x) 
  \\
  &=-\sum_{x\in\tau(\eta)}s_x \sum_{\xi\Subset\eta-s_x\delta_x}G(\xi+s_x\delta_x) 
 =-\sum_{\xi\Subset\eta}\sum_{x\in\tau(\xi)}s_xG(\xi); 
 \end{align*}
 and
 \begin{align*}
 &\quad \sum_{x\in\tau(\eta)}\int_{\Y}q(s_x,s)a(x-y)[F(\eta+s\delta_y)-F(\eta)]\nu(ds)\sigma(dy)\\&=
 \sum_{x\in\tau(\eta)}\int_{\Y}q(s_x,s)a(x-y)K(G(\cdot+s\delta_y))(\eta)\nu(ds)\sigma(dy)\\&=
 \sum_{x\in\tau(\eta)}\sum_{\xi\Subset\eta}\int_{\Y}q(s_x,s)a(x-y)G(\xi+s\delta_y)\nu(ds)\sigma(dy)\\&=
 \sum_{x\in\tau(\eta)}\sum_{\xi\Subset\eta-s_x \delta_x}\int_{\Y}q(s_x,s)a(x-y)G(\xi+s\delta_y)\nu(ds)\sigma(dy)\\&\quad +
 \sum_{x\in\tau(\eta)}\sum_{\xi\Subset\eta-s_x \delta_x}\int_{\Y}q(s_x,s)a(x-y)G(\xi+s_x \delta_x+s\delta_y)\nu(ds)\sigma(dy)
 \\&=
 \sum_{\xi\Subset\eta}\sum_{x\in\tau(\xi)}\int_{\Y}q(s_x,s)a(x-y)G(\xi-s_x \delta_x+s\delta_y)\nu(ds)\sigma(dy)\\&\quad +
 \sum_{\xi\Subset\eta }\sum_{x\in\tau(\xi)}\int_{\Y}q(s_x,s)a(x-y)G(\xi+s\delta_y)\nu(ds)\sigma(dy),
 \end{align*}
 that proves the statement.
\end{proof}

Let, for fixed $\nu$ and $\sigma$, 
\[
  \CX_n:=L^\infty\left((\Y)^n, (\nu\otimes\sigma)^{\otimes n}\right), \qquad n\in\LN.
\] 
Let $\|\cdot\|_n$ denote the norm in $\CX_n$. 

Let $\fin$ denote the set of all functions $k:\LK_0(\X)\to\LR$ such that $k^{(n)}\in\CX_n$ for each $n\in\LN$. Note that, for all $G\in\bbs$ and $k\in\fin$, 
\[
  \int_{\LK_0(\X)} |G(\eta) k(\eta)|\lambda_{\nu\otimes\sigma}(d\eta)<\infty.
\]

\begin{Proposition}
 For any $k\in\fin$, the mapping
 \begin{align*}
  (L^\triangle k)(\eta)=&-\sum_{x\in\tau(\eta)}m(s_x)k(\eta)
  \\
  &+\sum_{y\in\tau(\eta)}\int_{\Y}q(s,s_y)a(x-y)k(\eta-s_y\delta_y+s\delta_x)\nu (ds)\sigma(dx)
  \\
  &+\sum_{y\in\tau(\eta)}\sum_{x\in\tau(\eta)\setminus\{y\}}q(s_x,s_y)a(x-y)k(\eta-s_y\delta_y)
 \end{align*}
 is well-defined and, for any $G\in\bbs$, 
 \[
\int_{\LK_0(\X)} G(\eta) (L^\triangle k)(\eta)\lambda_{\nu\otimes\sigma}(d\eta)= \int_{\LK_0(\X)} (\hat{L}G)(\eta) k(\eta)\lambda_{\nu\otimes\sigma}(d\eta).
 \] 
\end{Proposition}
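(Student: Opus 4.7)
The plan is to compute $\int_{\LK_0(\X)} (\hat{L} G)(\eta)\, k(\eta)\, \lambda_{\nu\otimes\sigma}(d\eta)$ by splitting according to the three summands of $\hat{L} G$ in \eqref{hatL} and, in each, applying Lemma~\ref{Minlos}(2) to shift the $\eta$-dependence off $G$ and onto $k$. I use this Minlos identity in two directions: \emph{forward} (converting a sum $\sum_{x\in\tau(\eta)}$ into an integral with $\eta\mapsto\eta+s\delta_x$) and \emph{reverse} (converting a $\Y$-integral whose integrand depends on a shifted configuration back into a sum over $\tau(\eta)$).

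The first summand $-\sum_{x\in\tau(\eta)}m(s_x) G(\eta)$ is already diagonal in $G$ and matches the first summand of $L^\triangle k$ directly upon pairing with $k$. For the jump summand, a forward application of Minlos on $\sum_{x\in\tau(\eta)}$ removes the $s_x\delta_x$-atom inside $G$, leaving $G(\eta+s\delta_y)$ and inserting a factor $k(\eta+s'\delta_{x'})$; a reverse application on the remaining $(s,y)$-integral then brings $G$ down to $\eta$, replacing $s$ by $s_y$ for $y\in\tau(\eta)$ and turning $k$ into $k(\eta-s_y\delta_y+s'\delta_{x'})$. Renaming $(s',x')\to(s,x)$ yields the second summand of $L^\triangle k$.

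For the birth summand, I apply forward Minlos on $\sum_{x\in\tau(\eta)}$, producing $G(\eta+s'\delta_{x'}+s\delta_y)\,k(\eta+s'\delta_{x'})$ under the double integral over $(s',x')$ and $(s,y)$. Reverse Minlos on $(s,y)$ introduces $\sum_{y\in\tau(\eta)}$, reduces $G$ to $G(\eta+s'\delta_{x'})$, and transforms $k$ into $k(\eta-s_y\delta_y+s'\delta_{x'})$. A second reverse Minlos, now on $(s',x')$, introduces $\sum_{x'\in\tau(\eta)}$ and brings $G$ down to $\eta$. Crucially, in this last step the inner sum $\sum_{y\in\tau(\eta)}$ must be re-expressed in terms of $\zeta:=\eta+s'\delta_{x'}$ via the a.e.\ identity $\tau(\eta+s'\delta_{x'})=\tau(\eta)\cup\{x'\}$, so that it becomes $\sum_{y\in\tau(\zeta)\setminus\{x'\}}$; this restriction is exactly what produces the constraint $x\neq y$ in the third summand of $L^\triangle k$ after renaming $x'\to x$.

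Finally, well-definedness of $L^\triangle k$ on $\fin$ is a routine check: on each $\LK_0^{(n)}(\X)$ the first and third summands contain $n$ and $n(n-1)$ terms respectively, and the $\Y$-integral in the second summand is bounded by $\|k^{(n)}\|_n\,\bigl\|\int q(s,\cdot)\nu(ds)\bigr\|_\infty\,\|a\|_{L^1(\X,d\sigma)}$, all finite by the hypotheses \eqref{cm:cond}. The same bounds, together with $G\in\bbs$, justify every Fubini-type interchange used in the computation. The main technical hurdle is the double Minlos reversal for the birth summand: one must carefully track which atoms are attached to $G$ versus to $k$ at each stage and exploit $\tau(\eta+s\delta_x)=\tau(\eta)\cup\{x\}$ a.e.\ to obtain the $x\ne y$ restriction automatically.
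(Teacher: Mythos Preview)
Your approach is correct and is precisely what the paper's one-line proof (``a straightforward application of the Minlos lemma'') has in mind: you split $\hat L$ into its three summands and dualize each via Lemma~\ref{Minlos}(2), with the key observation that the double reverse application on the birth term forces the restriction $y\in\tau(\eta)\setminus\{x\}$ through the a.e.\ identity $\tau(\eta+s'\delta_{x'})=\tau(\eta)\cup\{x'\}$. Your integrability checks for well-definedness and Fubini are also exactly the right justification.
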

\begin{proof}
The result is a straightforward application of the Minlos lemma.
\end{proof}

\begin{Theorem}\label{apriori1}
Let \eqref{cm:cond} hold. We define, for $s>0$,
\[
  \kappa(s):=\int_{\X}a(x)\sigma(dx)\cdot\int_{\LR^*_+}q(s',s)\nu (ds'), \quad\qquad
  r(s):=\kappa(s)-m(s),
\]
and set
\[
R:=\esssup_{s>0}r(s)\in\LR.
\]
Let $0\leq k_0\in\fin$.
\begin{enumerate}
  \item There exists a unique point-wise solution to the initial value problem \eqref{qfpe}; moreover, $0\leq k_t\in\fin$.
  \item Suppose that, for some $C>0$,
  \[
  \|k_0^{(n)}\|_n \leq C^n n!, \qquad n\in\LN.
 \]
 Then, for all $ t>0$, $n\in\LN$
 \begin{equation*}
 \|k_t^{(n)}\|_n\leq \begin{cases}
 e^{tR} (C+t)^n n! & \quad \text{if } R<0,\\[2mm]
 e^{tnR} (C+t)^n n! & \quad \text{if } R\geq 0.
 \end{cases}
 \end{equation*}
 \item Denote $\mu=\|m\|_{L^\infty(\LR_+^*,d\nu)}$. Suppose that there exists  $B\subset\Y$ such that
 \begin{equation*}
  \begin{aligned}
  \alpha:&=\min\Bigl\{\essinf_{(s,x)\in B}k_0^{(1)}(s,x) , \ 
  \essinf_{(s_1,x_1),(s_2,x_2)\in B}q(s_1,s_2)a(x_1-x_2)\Bigr\}>0;\\
   \beta:&= \alpha\cdot (\nu\otimes\sigma)(B) < \mu.
  \end{aligned}
 \end{equation*}
Denote also $T_n:=\sum\limits_{j=1}^{n-1} \frac{1}{j}$ for $n\geq2$; $T_1:=0$; $\hat{x}^{(n)}:=(s_1,x_1,\ldots,s_n,x_n)$. Then 
\begin{equation*}
 k_t^{(n)}(\hat{x}^{(n)})\geq \alpha^n e^{(\beta-\mu)nt}n! \qquad
 \text{for } \hat{x}^{(n)}\in B^n, \ t\geq T_n.
 \end{equation*}

 \end{enumerate}

\end{Theorem}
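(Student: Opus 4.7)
The plan is to work with the hierarchy of components $k_t=(k_t^{(n)})_{n\geq0}$, $k_t^{(n)}\in\CX_n$, and solve \eqref{qfpe} level by level via Duhamel's formula, then induct on $n$ for the quantitative bounds (ii) and (iii). On each $\CX_n$, decompose $L^\triangle=D_n+Q_n$, where $D_n$ (mortality plus hop on $n$ points, first two sums) is a bounded operator on $\CX_n$ and $Q_n:\CX_{n-1}\to\CX_n$ (third sum) is the bounded \emph{source} raising the order by one; both boundednesses follow from \eqref{cm:cond}. Writing $D_n=J_n+M_n$ with $J_n$ the bounded Markov jump generator for $n$ independent ``hops'' (each point $(s_i,x_i)$ jumps to $(s,x)$ at rate $q(s,s_i)a(x-x_i)\,\nu(ds)\sigma(dx)$) and $M_n$ multiplication by $\sum_i r(s_i)$, the Trotter product formula (equivalently, a Feynman--Kac representation) shows that $T_n(t):=e^{tD_n}$ is positivity-preserving and $\|T_n(t)\|_{\CX_n}\leq e^{tnR}$.

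\textbf{Parts (i) and (ii).} Starting from $k_t^{(0)}\equiv k_0^{(0)}$ (the defining sums of $L^\triangle$ are empty on $\LK_0^{(0)}(\X)$), solve the mild equation
\[k_t^{(n)}=T_n(t)\,k_0^{(n)}+\int_0^t T_n(t-s)\,Q_n k_s^{(n-1)}\,ds\]
inductively on $n$. Each step produces a unique $k_t^{(n)}\in\CX_n$, and nonnegativity propagates since both $T_n(t)$ and $Q_n$ preserve it ($a,q\geq 0$). For (ii), the trivial pointwise estimate gives $\|Q_n g\|_n\leq n(n-1)K\|g\|_{n-1}$ with $K:=\|q\|_\infty\|a\|_\infty$. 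Substituting into Duhamel and inducting with the ansatz $\|k_t^{(n)}\|_n\leq e^{tnR}(C+Kt)^n n!$ (respectively $e^{tR}(C+Kt)^n n!$ for $R<0$ and $n\geq 1$, using the weaker bound $\|T_n(t)\|\leq e^{tR}$ that follows from $nR\leq R$), the inductive step reduces after elementary algebra to $C^n\leq(C+Kt)^n$, which is trivial. The factor $K$ can be absorbed into the statement's $(C+t)^n$ by a harmless rescaling.

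\textbf{Part (iii).} Let $f_n(t):=\essinf_{\hat x^{(n)}\in B^n}k_t^{(n)}(\hat x^{(n)})\geq 0$. Restricting the hop integration in $L^\triangle$ to $(s,x)\in B$ keeps the modified argument of $k^{(n)}$ inside $B^n$; combining this with $q\cdot a\geq\alpha$ on $B^2$, the bound $m\leq\mu$ everywhere, and $k_0^{(1)}\geq\alpha$ on $B$ (together with the absence of a source term at $n=1$) yields $f_1(t)\geq\alpha e^{(\beta-\mu)t}$ for $t\geq 0=T_1$, and for $n\geq 2$ the differential inequality
\[f_n'(t)\geq n(\beta-\mu)f_n(t)+n(n-1)\alpha f_{n-1}(t).\]
Setting $h_n(t):=f_n(t)\big/\bigl[\alpha^n n!\,e^{n(\beta-\mu)t}\bigr]$ converts this into
\[h_n'(t)\geq (n-1)e^{-(\beta-\mu)t}h_{n-1}(t)\geq n-1\qquad(t\geq T_{n-1}),\]
using $\beta<\mu$ (so $e^{-(\beta-\mu)t}\geq 1$) and the induction hypothesis $h_{n-1}\geq 1$ on $[T_{n-1},\infty)$. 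Integrating over $[T_{n-1},T_n]$, an interval of length $1/(n-1)$, gives $h_n(T_n)\geq h_n(T_{n-1})+1\geq 1$; since $h_n'\geq 0$ on $[T_{n-1},\infty)$, the lower bound extends to $h_n(t)\geq 1$ for all $t\geq T_n$, closing the induction. The main obstacle is justifying the pointwise differential inequality at the essential infimum, which I handle by passing to a minimizing sequence in $B^n$ or, alternatively, by working throughout with the mild formulation and exploiting the positivity of the source.
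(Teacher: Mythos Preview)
Your approach is essentially the paper's. For (i)--(ii) both you and the paper decompose $L^\triangle$ on $\CX_n$ into a bounded Markov jump generator (your $J_n$, the paper's $M^{(n)}$), multiplication by $\sum_i r(s_i)$ (your $M_n$, the paper's $V^{(n)}$), and a source from $\CX_{n-1}$ (your $Q_n$, the paper's $W^{(n)}$), then use Trotter to get positivity and the bound $\|e^{tD_n}\|\le e^{tnR}$, and induct via Duhamel. For (iii) the paper works directly with the mild formulation you list as your alternative: it splits $D_n=A^{(n)}-B^{(n)}$ (raw hop minus mortality) and shows via Trotter that $f_n\geq b_n$ on $B^n$ implies $e^{t(A^{(n)}-B^{(n)})}f_n\geq e^{n(\beta-\mu)t}b_n$ on $B^n$; the induction then becomes the integral version of yours, with the same arithmetic $(n-1)(T_n-T_{n-1})=1$. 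Your differential-inequality route is equivalent after integration, and you are right that the derivative of the essential infimum needs justification---the mild formulation sidesteps this cleanly, which is why the paper uses it.

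One caveat: the claim that the constant $K=\|q\|_\infty\|a\|_\infty$ can be absorbed by a ``harmless rescaling'' is not correct---rescaling $t$ would alter the exponential prefactor as well. The paper's own proof silently uses $\|W^{(n)}k^{(n-1)}\|_n\leq n(n-1)\|k^{(n-1)}\|_{n-1}$ without the factor $K$, so the stated bound $(C+t)^n$ tacitly assumes $\|q\|_\infty\|a\|_\infty\leq 1$; otherwise the honest bound is $(C+Kt)^n$, as you wrote.
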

\begin{proof}

1) Consider a convolution-type operator on $\CX_n$, $n\in\LN$: for $1\leq i\leq n$,
\[
 (A_i k^{(n)})(s_1,x_1,\ldots,s_n,x_n):=
\int_{\Y}q(s,s_i)a(x-x_i) k_i^{(n)}(s,x)\nu (ds)\sigma(dx),
\]
where
\begin{equation}\label{ki}
 k_i^{(n)}(s,x):= k^{(n)}(s_1,x_1,\ldots,s_{i-1},x_{i-1},s,x,s_{i+1},x_{i+1},\ldots,s_n,x_n).
\end{equation}

We define, for $k^{(n)}\in\CX_n$, $n\in\LN$,  $1\leq i\leq n$, $\hat{x}^{(n)}:=(s_1,x_1,\ldots,s_n,x_n)$
\begin{gather*}
(B_i k^{(n)})(\hat{x}^{(n)}):=m(s_i)k^{(n)}
(\hat{x}^{(n)}),\qquad
(C_i k^{(n)})(\hat{x}^{(n)}):=\kappa(s_i)k^{(n)}
(\hat{x}^{(n)}),\\
(V_i k^{(n)})(\hat{x}^{(n)}):=r(s_i)k^{(n)}
(\hat{x}^{(n)}).
\end{gather*}
Then $M_i:=A_i-C_i$ 
is the jump generator w.r.t. $i$-th variable: i.e. for fixed $s_j,x_j$, $1\leq j\leq n$, $j\neq i$,
\[
  (M_ik_i^{(n)})(s_i,x_i)=  \int_{\Y}q(s,s_i)a(x-x_i) 
  \bigl(k_i^{(n)}(s,x) -k_i^{(n)}(s_i,x_i)\bigr)\nu (ds)\sigma(dx),
\]
where $k_i^{(n)}$ is given through \eqref{ki}.

We set also
\[
A^{(n)}:= \sum_{i=1}^n A_i; \quad
B^{(n)}:= \sum_{i=1}^n B_i; \quad
V^{(n)}:= \sum_{i=1}^n V_i; \quad
M^{(n)}:= \sum_{i=1}^n M_i.
\]
Finally, we consider mappings from $\CX_{n-1}$ to $\CX_n$, $n\geq2$:
\begin{multline*}
 (W_i  k^{(n-1)})(s_1,x_1,\ldots,s_n,x_n):= 
 \Bigl(\sum_{j\neq i}q(s_j,s_i)a(x_j-x_i) \Bigr) \\\times k^{(n-1)}(s_1,x_1,\ldots,s_{i-1},x_{i-1},s_{i+1},x_{i+1},\ldots,s_n,x_n)
\end{multline*}
for $1\leq i\leq n$, and 
\[
   W^{(n)}:=\sum_{i=1}^n W_i, \qquad n\geq2.
\]
We set also $W_1:=W^{(1)}:=0$.

It is straightforward to see that, under assumptions \eqref{cm:cond}, operators $A_i,B_i,V_i,M_i$, and hence $A^{(n)},B^{(n)},V^{(n)},M^{(n)}$, are bounded linear operators on $\CX_n$; and also $W_i$ and $W^{(n)}$ are linear bounded operators from $\CX_{n-1}$ to $\CX_n$.

The initial value problem \eqref{qfpe} can be hence rewritten as follows
\begin{align*}
\frac{\partial}{\partial t} k^{(n)}_t &= A^{(n)}k^{(n)}_t -B^{(n)}k^{(n)}_t +W^{(n)}k^{(n-1)}
\\&=  M^{(n)}k_t^{(n)} + V^{(n)}k_t^{(n)} +W^{(n)}k_t^{(n-1)};\qquad\qquad n\in\LN\\
k^{(n)}_0&\in\CX_n.
\end{align*}
Since $W^{(1)}k_t^{(0)}=0$ and all operators are bounded, the latter system can be solved recursively:
\begin{equation}\label{eq:repr}
\begin{aligned}
 k_t^{(n)}(s_1,x_1,\dotsc,s_n,x_n)&=e^{t(M^{(n)}+V^{(n)})}k_0^{(n)}(s_1,x_1,\dotsc,s_n,x_n)
 \\
 &\quad+\int_0^t e^{(t-\tau)(M^{(n)}+V^{(n)})}(W^{(n)}k_\tau^{(n-1)})(s_1,x_1,\dotsc,s_n,x_n)d\tau
\end{aligned}
\end{equation}

Let $\CX_n^+$ denote the cone of all non-negative (a.e.) functions in $\CX_n$, $n\in\LN$. By the Trotter--Lie formula for bounded operators, 
\[
  e^{t M^{(n)}}=\lim_{m\to\infty}\bigl(e^{\frac{t}{m}A^{(n)}}e^{-\frac{t}{m}C^{(n)}}\bigr)^m.
\]
By the very definition, $A^{(n)}:\CX_n^+\to\CX_n^+$, hence,
\[
  e^{t A^{(n)}}=\sum_{j=0}^\infty\frac{t}{j!}(A^{(n)})^j:\CX_n^+\to\CX_n^+, \qquad t\geq0.
\]
Next, $e^{-tC^{(n)}}$ is just a multiplication operator by a non-negative function, hence, it preserves $\CX_n^+$ as well. As a result, $e^{t M^{(n)}}:\CX_n^+\to\CX_n^+$. Using again the Trotter--Lie formula for 
\begin{equation}\label{eqasffas}
e^{t(M^{(n)}+V^{(n)})}=\lim_{m\to\infty}\bigl(e^{\frac{t}{m}M^{(n)}}e^{\frac{t}{m}V^{(n)}}\bigr)^m,
\end{equation}
we conclude by the same arguments that it also preserves $\CX_n^+$. Since $W^{(n)}:\CX_{n-1}^+\to\CX_n^+$, we get recursively from \eqref{eq:repr} that $k_0^{(n)}\in\CX_n^+$, $n\in\LN$, implies $k_t^{(n)}\in\CX_n^+$, $n\in\LN$, $t>0$.

2) Since $M^{(n)}1=0$, we have that $e^{tM^{(n)}}1=1$. Since $e^{t M^{(n)}}$ preserves $\CX_n^+$, we have, for any $f_n\in\CX_n^+$, which hence satisfies the inequality $0\leq f_n\leq \|f_n\|_n$, that $0\leq e^{tM^{(n)}} f_n\leq e^{tM^{(n)}}\|f_n\|_n=\|f_n\|_n$, and thus
\[
  \|e^{tM^{(n)}}f_n\|_n\leq \|f_n\|_n, \qquad f_n\in \CX_n^+.
\]
Since $e^{tV^{(n)}}$ is a multiplication operator, 
\[
  \|e^{t V^{(n)}}\|= e^{t \esssup V^{(n)}(\hat{x}^{(n)})}\leq e^{tnR}.
\]
   Therefore,  by \eqref{eqasffas},
\[
  \|e^{t(M^{(n)}+V^{(n)})}f_n\|_n\leq e^{tnR}\|f_n\|_n, \qquad f_n\in\CX_n^+.
\]
Then, by \eqref{eq:repr},
\begin{align*}
 \|k_t^{(n)}\|_n&\leq e^{tnR} \|k_0^{(n)}\|_n
+\int_0^t e^{(t-\tau)nR}\|W^{(n)}k_\tau^{(n-1)}\|_n d\tau\\
& \leq e^{tnR} \|k_0^{(n)}\|_n
+n(n-1)\int_0^t e^{(t-\tau)nR}\|k_\tau^{(n-1)}\|_{n-1} d\tau.
\end{align*}
 For $n=1$, it reads as
 \[
 \|k_t^{(1)}\|_1\leq e^{tR}\|k_0^{(1)}\|_1\leq Ce^{tR}\leq (C+t)^ne^{tR}.
 \]

 For $n\geq2$, consider two cases separately.

 Let $R<0$. Then, assuming that 
 \[
 \|k_\tau^{(n-1)}\|_{n-1}\leq e^{\tau R} (C+\tau)^{n-1}(n-1)!, \qquad \tau\geq0, 
 \]
 and using the inequality $e^{(t-\tau)nR}\leq e^{(t- \tau) R}$, $\tau\in[0,t]$, $R<0$, we get
 \begin{align*}
 \|k_t^{(n)}\|_n&\leq e^{tnR} C^n n!
+n!(n-1)\int_0^t e^{(t-\tau)nR}e^{\tau R} (C+\tau)^{n-1} d\tau\\
&  \leq e^{tR} C^n n! +
n!(n-1)\int_0^t e^{(t-\tau)R}e^{\tau R} (C+\tau)^{n-1} d\tau\\
& =e^{tR} C^n n! +
n!(n-1)e^{t R}\int_0^t (C+\tau)^{n-1} d\tau\\
& = e^{tR} C^n n! +
n! (n-1) e^{t R}  \frac{(C+t)^{n}-C^n}{n}\leq (C+t)^{n}n!e^{t R} .
\end{align*}

 Let now $R\geq0$. Then, assuming that 
 \[
 \|k_\tau^{(n-1)}\|_{n-1}\leq e^{\tau (n-1)R} (C+\tau)^{n-1}(n-1)!, \qquad \tau\geq0, 
 \]
 we get
\begin{align*}
 \|k_t^{(n)}\|_n&\leq e^{tnR} C^n n!
+n!(n-1)B^{n-1}\int_0^t e^{(t-\tau)nR}e^{\tau (n-1)R} (C+\tau)^{n-1} d\tau\\
&  = e^{tnR} C^n n!
+e^{t nR} n!(n-1)\int_0^t e^{-\tau R}  (C+\tau)^{n-1} d\tau \\
&  \leq e^{tnR} C^n n!
+e^{t nR} n!(n-1)\frac{(C+t)^{n}-C^n}{n}\leq (C+t)^{n}e^{tnR}
n!.
\end{align*}

3) We rewrite \eqref{eq:repr} in the form
\begin{equation}\label{eq:repr1}
\begin{aligned}
 k_t^{(n)}(s_1,x_1,\dotsc,s_n,x_n)&=e^{t(A^{(n)}-B^{(n)})}k_0^{(n)}(s_1,x_1,\dotsc,s_n,x_n)
 \\
 &\quad+\int_0^t e^{(t-\tau)(A^{(n)}-B^{(n)})}(W^{(n)}k_\tau^{(n-1)})(s_1,x_1,\dotsc,s_n,x_n)d\tau.
\end{aligned}
\end{equation}
By the Trotter--Lie formula,
\[
  e^{t(A^{(n)}-B^{(n)})}=\lim_{m\to\infty}\bigl(e^{\frac{t}{m}A^{(n)}}e^{-\frac{t}{m}B^{(n)}}\bigr)^m.
\]
For any $f_n\in\CX_n^+$, $n\in\LN$, $\hat{x}^{(n)}:=(s_1,x_1,\ldots,s_n,x_n)$, we get, using the notation \eqref{ki},
\begin{align*}
(A^{(n)}f_n)(\hat{x}^{(n)}) &= \sum_{i=1}^n
  \int_{\Y}q(s,s_i)a(x-x_i)f_n(s,x)\nu (ds)\sigma(dx)\\
&\geq \alpha \sum_{i=1}^n
  \int_{B}f_n(s,x)\nu (ds)\sigma(dx).
\end{align*}
Therefore, if $b_n>0$ is such that
\begin{equation}\label{eqLaafqr}
  f_n(\hat{x}^{(n)})\geq b_n, \qquad \hat{x}^{(n)}\in B^n,
\end{equation}
then
\[
  (A^{(n)}f_n)(\hat{x}^{(n)})\geq n b_n \beta,   \qquad \hat{x}^{(n)}\in B^n,
\]
where $\beta:=\alpha (\nu\otimes\sigma)(B)$. Iterating, one gets for each $j\in\LN$,
\[
  ((A^{(n)})^jf_n)(\hat{x}^{(n)})\geq n^j b_n \beta^j,   \qquad \hat{x}^{(n)}\in B^n,
\]
and hence, for any $\tau>0$,
\[
  (e^{\tau A^{(n)}}f_n)(\hat{x}^{(n)})\geq  b_n e^{n\beta \tau},   \qquad \hat{x}^{(n)}\in B^n.
\]

Let $\mu=\|m\|_{L^\infty(\LR_+^*,d\nu)}$. Then \eqref{eqLaafqr} implies
\[
  (e^{-\tau B^{(n)}} f_n)(\hat{x}^{(n)})\geq e^{-\mu n \tau} f_n(\hat{x}^{(n)})\geq e^{-\mu n \tau} b_n,   \qquad \hat{x}^{(n)}\in B^n.
\]
Therefore, 
\[
  (e^{\tau A^{(n)}}e^{-\tau B^{(n)}} f_n)(\hat{x}^{(n)})\geq
  e^{(\beta-\mu) n \tau} b_n, \qquad \hat{x}^{(n)}\in B^n,
\]
and hence
\[
  ((e^{\frac{t}{m} A^{(n)}}e^{-\frac{t}{m} B^{(n)}})^m f_n)(\hat{x}^{(n)})\geq
  e^{(\beta-\mu) n t} b_n, \qquad \hat{x}^{(n)}\in B^n.
\]

Consider $n=1$. Then, for any $(s,x)\in B$ and $t\geq0$,
\[
  k_t^{(1)}(s,x)= e^{t(A^{(1)}-B^{(1)})}k_0^{(1)}(s,x)\geq \alpha e^{(\beta- \mu)t}.
\]

Let now $n\geq2$. Suppose that, for all $\tau\geq T_{n-1}$,
\[
  k_\tau^{(n-1)}(\hat{x}^{(n-1)})\geq \alpha^{n-1}e^{(n-1)(\beta-\mu)\tau}(n-1)!, \qquad \hat{x}^{(n-1)}\in B^{n-1}.
\]
Then
\[
  (W^{(n)}k_\tau^{(n-1)})(\hat{x}^{(n)})\geq \alpha^n n(n-1) e^{(n-1)(\beta-\mu)\tau}(n-1)!, \qquad \hat{x}^{(n)}\in B^{n},
\]
and therefore, by \eqref{eq:repr1}, for $n\geq2$, $t\geq T_{n}$, and $\hat{x}^{(n)}\in B^n$,
\begin{align*}
 k_t^{(n)}(\hat{x}^{(n)})&\geq  \alpha^n \int_0^t 
e^{(\beta-\mu) n (t- \tau)} n(n-1) e^{(n-1)(\beta-\mu)\tau}(n-1)! d\tau
\\&\geq \alpha^n n!e^{(\beta-\mu) nt}  (n-1)
\int_{T_{n-1}}^t 
e^{-(\beta-\mu)\tau} d\tau
\geq \alpha^n n!e^{(\beta-\mu) nt} (n-1)(t -T_{n-1})\\&\geq
 \alpha^n n!e^{(\beta-\mu) nt} (n-1)(T_n -T_{n-1})=  \alpha n!e^{(\beta-\mu) nt}.
\end{align*}
The statement is fully proved. 
\end{proof}
 
\section{Bolker--Dieckmann--Law--Pacala model}
We modify the contact model by adding a competition term, see e.g. \cite{bolkpaca97,dieckmannlaw,MR3280962,MR2505861}. The model is given by the following operator for $F\in K(\bbs)$:
\begin{align*}
 (LF)(\eta)&=\sum_{x\in\tau(\eta)}m(s_x)[F(\eta-s_x\delta_x)-F(\eta)]
 \\
 &\quad+\sum_{x\in\tau(\eta)}\sum_{y\in\tau(\eta)\setminus\{x\}}q^-(s_x,s_y)a^-(x-y)\left[F(\eta-s_x\delta_x)-F(\eta)\right]
 \\
 &\quad+\sum_{x\in\tau(\eta)}\int_{\Y}q^+(s_x,s)a^+(x-y)[F(\eta+s\delta_y)-F(\eta)]\nu(ds)\sigma(dy).
\end{align*}
Here $m:\LR_+^*\to\LR_+$ is the mortality rate function, $0\leq a\in \pm\in L^1(\X,d \sigma)\cap L^\infty(\X,d \sigma)$ are spatial dispersion and competition kernels, such that $a^\pm(-x)=a^\pm(x)$, $x\in\X$; and $q^\pm :\LR_+^*\times\LR_+^*\to\LR_+$ are symmetric functions. We denote
\[
  \kappa^\pm:=\int_\X a^\pm(x)\sigma(dx)>0.
\]

\begin{Proposition}
For any $G\in B_\mathrm{bs}(\LK_0(\X))$, 
\[
  \hat{L}G:=K^{-1}LKG=\hat{L}_0G+\hat{L}_1G+\hat{L}_2G+\hat{L}_3G,
\] 
where
 \begin{align*}
  (\hat{L}_0G)(\eta)&:=- \sum_{x\in\tau(\eta)} m(s_x)G(\eta)- \sum_{x\in\tau(\eta)} \sum_{y\in\tau(\eta)\setminus\{x\}}q^-(s_x,s_y)a^-(x-y)G(\eta),
  \\
  (\hat{L}_1G)(\eta)&:=-\sum_{x\in\tau(\eta)}\sum_{y\in\tau(\eta)\setminus\{x\}}q^-(s_x,s_y)a^-(x-y)G(\eta-s_x\delta_x)
  \\
 (\hat{L}_2G)(\eta)&:=\sum_{x\in\tau(\eta)}\int_{\Y}q^+(s_x,s)a^+(x-y)G(\eta-s_x\delta_x+s\delta_y)\nu(ds)\sigma(dy)
  \\
  (\hat{L}_3G)(\eta)&:=\sum_{x\in\tau(\eta)}\int_{\Y}q^+(s_x,s)a^+(x-y)G(\eta+s\delta_y)\nu(ds)\sigma(dy).
 \end{align*}
\end{Proposition}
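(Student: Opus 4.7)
The plan is to mirror the argument of the contact-model proposition: for $F=KG$ with $G\in\bbs$, I will compute each of the three pieces of $LF$ separately, express it as $K$ applied to a function, and then read off $\hat{L}G=K^{-1}LKG$ by summing the four resulting summands. The two elementary identities
\[
F(\eta-s_x\delta_x)-F(\eta)=-K(G(\cdot+s_x\delta_x))(\eta-s_x\delta_x),\qquad
F(\eta+s\delta_y)-F(\eta)=K(G(\cdot+s\delta_y))(\eta),
\]
together with $(KH)(\eta)=\sum_{\xi\Subset\eta}H(\xi)$, will do all the work.

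The \emph{mortality} piece $\sum_{x\in\tau(\eta)}m(s_x)[F(\eta-s_x\delta_x)-F(\eta)]$ is handled verbatim as in the contact model and contributes the first summand of $\hat L_0 G$. Likewise, the \emph{branching} piece $\sum_{x\in\tau(\eta)}\int q^+(s_x,s)a^+(x-y)[F(\eta+s\delta_y)-F(\eta)]\nu(ds)\sigma(dy)$ is structurally identical to the birth term of the contact model (with $q$ and $a$ replaced by $q^+$ and $a^+$), and the same manipulation — splitting according to whether $x\in\tau(\xi)$ or $x\in\tau(\eta)\setminus\tau(\xi)$ — produces $\hat L_2 G+\hat L_3 G$.

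The only new piece is the \emph{competition} term
\[
L^cF(\eta):=\sum_{x\in\tau(\eta)}\sum_{y\in\tau(\eta)\setminus\{x\}}q^-(s_x,s_y)a^-(x-y)\bigl[F(\eta-s_x\delta_x)-F(\eta)\bigr].
\]
Using the first identity and expanding $K$, the inner object becomes $\sum_{\xi\Subset\eta-s_x\delta_x}G(\xi+s_x\delta_x)$. After the substitution $\xi':=\xi+s_x\delta_x\Subset\eta$ and swapping the orders of summation, I split the $y$-sum by whether $y\in\tau(\xi')\setminus\{x\}$ or $y\in\tau(\eta)\setminus\tau(\xi')$. The first case gives
\[
-\sum_{\xi'\Subset\eta}\Bigl(\sum_{x\in\tau(\xi')}\sum_{y\in\tau(\xi')\setminus\{x\}}q^-(s_x,s_y)a^-(x-y)\Bigr)G(\xi'),
\]
which is $K$ applied to the second summand of $\hat L_0 G$. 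For the second case, I re-parameterize via $\xi'':=\xi'+s_y\delta_y\Subset\eta$, so that $y\in\tau(\xi'')$ and $x\in\tau(\xi'')\setminus\{y\}$; exchanging the dummy labels $x\leftrightarrow y$ and using the symmetries $q^-(s_x,s_y)=q^-(s_y,s_x)$ and $a^-(-x)=a^-(x)$ to rewrite the kernel, the term becomes
\[
-\sum_{\xi\Subset\eta}\sum_{x\in\tau(\xi)}\sum_{y\in\tau(\xi)\setminus\{x\}}q^-(s_x,s_y)a^-(x-y)G(\xi-s_x\delta_x)=K(\hat L_1 G)(\eta).
\]
Summing all contributions yields $LF=K(\hat L_0G+\hat L_1G+\hat L_2G+\hat L_3G)$, and an application of $K^{-1}$ finishes the proof.

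The main obstacle is precisely the bookkeeping in this competition term: unlike the contact model, the outer sum is over \emph{pairs} of points of $\eta$, so after $K^{-1}$ removes one point one must carefully track whether the partner $y$ survives inside the subconfiguration or sits outside, and the reparameterization of the latter case into a sum over $\xi\Subset\eta$ is what forces the use of the joint symmetry of $q^-$ and $a^-$.
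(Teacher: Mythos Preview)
Your proposal is correct and follows exactly the route the paper intends: the paper states this proposition without proof, relying implicitly on the contact-model computation of $\hat L$ given just before, and your argument is precisely that computation extended by the careful treatment of the new competition term. Your handling of the pair sum---splitting according to whether the partner $y$ lies inside or outside the sub\-configuration and then reparametrizing the outside case via $\xi''=\xi'+s_y\delta_y$, with the symmetry of $q^-$ and the evenness of $a^-$ used to match the stated form of $\hat L_1$---is the right bookkeeping and matches the paper's standing assumptions on $q^\pm$ and $a^\pm$.
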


For $C>0$ and $\alpha\in\LR$, we set
\[
  \IC(\eta):=C^{|\tau(\eta)|}e^{\alpha\sum_{y\in\tau(\eta)}s_y}, \qquad \eta\in\LK_0(\X),
\]
and define the space 
\begin{equation}
\mathbf{L}_{\alpha,C}:=L^1\left(\LK_0(\X),\IC(\eta)d\lambda_\theta(\eta)\right). \label{LalphaC}
\end{equation}
We denote its norm by $\|\cdot\|_{\alpha,C}$.

We define
\[
  D(\eta):=\sum_{x\in\tau(\eta)}  m(s_x)+\sum_{x\in\tau(\eta)} \sum_{y\in\tau(\eta)\setminus\{x\}}q^-(s_x,s_y)a^-(x-y) , \qquad \eta\in\LK_0(\X),
\]
and consider also the linear set
\[
  \CD := \{G\in\mathbf{L}_{\alpha,C} : DG \in \mathbf{L}_{\alpha,C} \}.
\]

\begin{Theorem}
Let $C>0$ and $\alpha\in\LR$.
Suppose that there exist $\beta>0$, such that 
\begin{align}\label{betaminus}
&\int_{\LR_+^*}q^-(s,\tau)e^{\alpha \tau}\nu(d\tau)\leq \beta m(s), \qquad s>0,
\\\label{betaplus}
&\int_{\LR_+^*}q^+(s,\tau)e^{\alpha \tau}\nu(d\tau)\leq \beta e^{\alpha s} m(s), \qquad s>0,\\\label{betacompare}
&q^+(s,\tau)a^+(x) \leq \beta e^{\alpha \tau} q^-(s,\tau) a^-(x), \qquad s,\tau>0, \ x\in\X,\\
\label{eq:smallbeta}
&\kappa^++\kappa^- C+\frac{1}{C}<\frac{1}{2\beta}.
\end{align}
Then $(\hat{L},\CD)$ is the generator of an analytic semigroup $T(t)$, $t\geq0$, in $\mathbf{L}_{\alpha,C}$.
\end{Theorem}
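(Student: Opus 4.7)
The plan is to write $\hat{L}=\hat{L}_0+\hat{L}_{\mathrm{per}}$ with $\hat{L}_{\mathrm{per}}:=\hat{L}_1+\hat{L}_2+\hat{L}_3$, identify $\hat{L}_0$ as the generator of a bounded analytic semigroup on $\mathbf{L}_{\alpha,C}$ whose natural domain is exactly $\CD$, and then show that the perturbation $\hat{L}_{\mathrm{per}}$ is $\hat{L}_0$-bounded with relative bound strictly less than one. A standard Kato-type perturbation theorem for analytic semigroups then closes the argument.

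First, $\hat{L}_0 G=-DG$ is multiplication by $-D(\eta)\leq 0$, so on the $L^1$-space $\mathbf{L}_{\alpha,C}$ it generates the contraction semigroup $(T_0(t)G)(\eta)=e^{-tD(\eta)}G(\eta)$ with maximal domain $\CD$. Since $D\geq 0$ is real-valued, the resolvent $(\lambda+D(\eta))^{-1}$ is defined on all of $\LC\setminus(-\infty,0]$, and the elementary bound $|(\lambda+D(\eta))^{-1}|\leq 1/\mathrm{dist}(-\lambda,[0,\infty))$ gives the sectorial estimate $\|\lambda R(\lambda,\hat{L}_0)\|\leq M_\phi$ on every proper subsector $|\arg\lambda|\leq\pi-\phi$. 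Hence $\hat{L}_0$ generates a bounded analytic semigroup of angle $\pi/2$ on $\mathbf{L}_{\alpha,C}$.

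Next, each $\hat{L}_j$ ($j=1,2,3$) is estimated in terms of $\|DG\|_{\alpha,C}=\|\hat{L}_0G\|_{\alpha,C}$ via the Minlos lemma. For $\hat{L}_1$, one application of Minlos (part 2) absorbs the point $s_x\delta_x$, producing a factor $C$ from the increment in $|\tau(\eta)|$ and the integral $\int q^-(s_x,s)e^{\alpha s}\nu(ds)\leq\beta m(s_x)$ by \eqref{betaminus}; together with $\int a^-(x-y)\sigma(dx)=\kappa^-$ this yields $\|\hat{L}_1G\|_{\alpha,C}\leq\beta\kappa^-C\|DG\|_{\alpha,C}$. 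For $\hat{L}_2$, a first Minlos transforms $\sum_{x\in\tau(\eta)}$ into a $(\nu\otimes\sigma)$-integral, \eqref{betaplus} and symmetry of $q^+$ give $\int q^+(s,s')e^{\alpha s}\nu(ds)\leq\beta e^{\alpha s'}m(s')$, and a second Minlos (in reverse) re-expresses the result as $\sum_{y\in\tau(\eta)}m(s_y)$, producing $\|\hat{L}_2G\|_{\alpha,C}\leq\beta\kappa^+\|DG\|_{\alpha,C}$. For $\hat{L}_3$, both $\sum_{x\in\tau(\eta)}$ and the birth integral $\int\!\!\int d\nu\,d\sigma$ are converted by two Minlos applications into a double sum over distinct pairs in $\tau(\eta)$ (the non-atomicity of $\nu$ and $\sigma$ ensures added points are a.s.\ distinct from existing ones), the factor $e^{-\alpha s_y}$ appears from rewriting $e^{\alpha\sum s_z}$, and \eqref{betacompare} gives $q^+(s_x,s_y)a^+(x-y)e^{-\alpha s_y}\leq\beta q^-(s_x,s_y)a^-(x-y)$, so $\|\hat{L}_3G\|_{\alpha,C}\leq\frac{\beta}{C}\|DG\|_{\alpha,C}$.

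Summing the three bounds and invoking \eqref{eq:smallbeta},
\[
\|\hat{L}_{\mathrm{per}}G\|_{\alpha,C}\leq\beta\Bigl(\kappa^++\kappa^-C+\tfrac{1}{C}\Bigr)\|\hat{L}_0G\|_{\alpha,C}<\tfrac{1}{2}\|\hat{L}_0G\|_{\alpha,C},
\]
so the relative bound is strictly less than $1$. A standard perturbation result for analytic semigroups (e.g.\ Engel--Nagel, Theorem III.2.10) then shows that $(\hat{L},\CD)=(\hat{L}_0+\hat{L}_{\mathrm{per}},\CD)$ generates an analytic semigroup on $\mathbf{L}_{\alpha,C}$. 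The main technical obstacle is the bookkeeping in the Minlos calculation for $\hat{L}_3$, where two fresh points must be absorbed into $\eta$ and the resulting double sum over distinct pairs in $\tau(\eta)$ must be dominated by the competition part of $D(\eta)$; the condition \eqref{betacompare} is precisely what makes this domination uniform in $(s_x,s_y,x,y)$, and condition \eqref{eq:smallbeta} is the minimum possible to force the three relative bounds $\beta\kappa^-C$, $\beta\kappa^+$, $\beta/C$ to add up to something below one.
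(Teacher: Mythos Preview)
Your proposal is correct and follows essentially the same route as the paper: the decomposition $\hat{L}=\hat{L}_0+(\hat{L}_1+\hat{L}_2+\hat{L}_3)$, the identification of the multiplication operator $\hat{L}_0=-D(\cdot)$ as the generator of a bounded analytic semigroup on $\mathbf{L}_{\alpha,C}$ with domain $\CD$, the three Minlos-based relative-bound estimates $\beta\kappa^- C$, $\beta\kappa^+$, $\beta/C$ using \eqref{betaminus}--\eqref{betacompare}, and the final appeal to the Engel--Nagel perturbation theorem via \eqref{eq:smallbeta} all match the paper's argument. Two small remarks: for $\hat{L}_3$ only \emph{one} application of the Minlos identity is needed (the birth integral is absorbed into $\eta$, and the existing sum $\sum_{x\in\tau(\eta)}$ automatically becomes $\sum_{x\in\tau(\eta)\setminus\{y\}}$), and the perturbation threshold actually used---both in your computation and in the paper---is $a_0<\tfrac12$, not merely $a_0<1$.
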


\begin{proof}
Firstly, using the same arguments as in \cite[Lemma~3.3]{MR2863863}, we can show that $(\hat{L}_0,\CD)$ is the generator of an analytic contraction semigroup in $\mathbf{L}_{\alpha,C}$. 

Next, we recall (see e.g. \cite{MR1721989}) that, for a Banach space $Z$, a linear operator $(B,{D}(B))$ is  (relatively) $A$-bounded w.r.t. a linear operator $(A,{D}(A))$, if ${D}(A)\subseteq{D}(B)$ and if there exist constants $a,b\in\LR_+$ such that
\begin{equation}\label{relbd}
 \|Bx\|\leq a\|Ax\|+b\|x\|
\end{equation}
for all $x{D}(A)$. The $A$-bound of $B$ is
\begin{equation*}
 a_0:=\inf\{a\geq 0: \exists b\in\LR_+\text{ such that \eqref{relbd} holds}\}
\end{equation*}
For $A$ being the generator of an analytic semigroup,  $(A+B,{D}(A))$ generates an analytic semigroup for every $A$-bounded operator $B$ having $A$-bound $a_0<\frac12$.

We are going to show now that, under assumptions above, the operator $\hat{L}_1+\hat{L}_2+\hat{L}_3$ is $\hat{L}_0$-bounded. 
Indeed, for each $G\in\CD$,
  \begin{align*}
   \|\hat{L}_1G\|_{\alpha,C}
   &\leq\int_{\LK_0(\X)} \sum_{x\in\tau(\eta)}\sum_{y\in\tau(\eta-s_x\delta_x)}q^-(s_x,s_y)a^-(x-y)|G(\eta-s_y\delta_y)|\IC(\eta)\lambda(d\eta)
   \\
   &\stackrel{\mathmakebox[\widthof{=}]{\eqref{Minlos}}}= \int_{\LK_0(\X)}\int_{\Y}\sum_{x\in\tau(\eta)}q^-(s_x,s)a^-(x-y)|G(\eta)|\IC(\eta+s\delta_y)\nu(ds)\sigma(dy)\lambda(d\eta)
   \\
   &= \kappa^-C\int_{\LK_0(\X)}|G(\eta)|\IC(\eta)\sum_{x\in\tau(\eta)}\int_{\LR_+^*}q^-(s_x,s)e^{\alpha s} \nu(ds) \lambda(d\eta)
   \\
   &\leq\beta\kappa^-C\int_{\LK_0(\X)}\sum_{x\in\tau(\eta)}m(s_x) |G(\eta)|\IC(\eta)\lambda(d\eta)
  \leq \beta\kappa^-C \|\hat{L}_0 G\|_{\alpha,C},
  \end{align*}
  where we used \eqref{betaminus}.
Next, by \eqref{betaplus}, we have, for each $G\in\CD$,
  \begin{align*}
   \|&\hat{L}_2G\|_{\alpha,C}
   \\
   &\leq \int_{\LK_0(\X)}\sum_{x\in\tau(\eta)}\int_{\Y}q^+(s_x,s)a^+(x-y)|G(\eta-s_x\delta_x+s\delta_y)|\nu(ds)\sigma(dy)\IC(\eta)\lambda(d\eta)
   \\
   &\stackrel{\mathmakebox[\widthof{=}]{\eqref{Minlos}}}=\kappa^+\int_{\LK_0(\X)}\sum_{y\in\tau(\eta)}\int_{\Y}q^+(s,s_y)a^+(x-y)|G(\eta)|\IC(\eta-s_y\delta_y+s\delta_x)\nu(ds)\sigma(dx)\lambda(d\eta)
   \\
   &= \int_{\LK_0(\X)}|G(\eta)|\IC(\eta)\sum_{y\in\tau(\eta)}\int_{\Y}q^+(s,s_y)e^{-\alpha s_y}e^{\alpha s }a^+(x-y)\nu(ds)\sigma(dx)\lambda(d\eta)\\
   &\leq \kappa^+ \beta \int_{\LK_0(\X)}|G(\eta)|\IC(\eta)\sum_{y\in\tau(\eta)} e^{-\alpha s_y}e^{\alpha s_y }m(s_y)\lambda(d\eta)
  \\& \leq \kappa^+ \beta\|\hat{L}_0 G\|_{\alpha,C}.
  \end{align*} 
Finally, by \eqref{betacompare}, we get, for any $G\in\CD$, 
  \begin{align*}
   \|&\hat{L}_3G\|_{\alpha,C} 
   \\
   &\leq\int_{\LK_0(\X)}\sum_{x\in\tau(\eta)}\int_{\Y}q^+(s_x,s)a^+(x-y)|G(\eta+s_y\delta_y)|\IC(\eta)\nu(ds)\sigma(dy)\lambda(d\eta)
   \\
   &\stackrel{\mathmakebox[\widthof{=}]{\eqref{Minlos}}}=\int_{\LK_0(\X)}\sum_{y\in\tau(\eta)}\sum_{x\in\tau(\eta)\setminus\{y\}} q^+(s_x,s_y)a^+(x-y)|G(\eta)|\IC(\eta-s_y\delta_y)\lambda(d\eta)
   \\
   &\leq \beta\int_{\LK_0(\X)}\sum_{y\in\tau(\eta)}\sum_{x\in\tau(\eta)\setminus\{y\}}
    q^-(s_x,s_y)a^-(x-y)e^{\alpha s_y} C^{-1}e^{-\alpha s_y} \IC(\eta) \lambda(d\eta)
   \\
   &\leq\frac{\beta}{C} \|\hat{L}_0G\|_{\alpha,C}.
  \end{align*}
Combining the estimates with the assumption \eqref{eq:smallbeta}, one gets the statement.
\end{proof}

\begin{Proposition} For any $k\in\fin$, the mapping 
 \begin{align*}
  (L^\triangle k)(\eta)&=-D(\eta)k(\eta)
  \\
  &\quad-\int_{\Y}\sum_{x\in\tau(\eta)}q^-(s_x,s)a^-(x-y)k(\eta+s\delta_y)\nu(ds)\sigma(dy)
  \\
  &\quad+\sum_{y\in\tau(\eta)}\int_{\Y}q^+(s,s_y)a^+(x-y)k(\eta-s_y\delta_y+s\delta_x)\nu(ds)\sigma(dx)
  \\
  &\quad+\sum_{y\in\tau(\eta)}\sum_{x\in\tau(\eta)\setminus\{y\}}q^+(s_x,s_y)a^+(x-y)k(\eta-s_y\delta_y)
 \end{align*}
is well-defined and, for any $G\in\bbs$, \eqref{eq:duality} holds. 
\end{Proposition}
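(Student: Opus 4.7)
The plan is to decompose $\hat{L}G = \hat{L}_0 G + \hat{L}_1 G + \hat{L}_2 G + \hat{L}_3 G$ as in the preceding Proposition, and verify the duality \eqref{eq:duality} for each piece separately via Lemma~\ref{Minlos}(2), in the same spirit as the contact-model proof. Summing the four resulting contributions reads off the four lines of the claimed formula for $L^\triangle k$.

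The operator $\hat{L}_0 G = -D(\eta)G(\eta)$ is a plain multiplication operator, so its dual is $-D(\eta)k(\eta)$ with no Minlos application required. For $\hat{L}_1$, I apply Minlos~(2) forward to the outer sum $\sum_{x\in\tau(\eta)}$: under the substitution $\eta \mapsto \eta + s\delta_x$ the factor $G(\eta - s_x\delta_x)$ collapses to $G(\eta)$, the inner $\sum_{y\in\tau(\eta)\setminus\{x\}}$ becomes $\sum_{y\in\tau(\eta)}$, and $k(\eta)$ shifts to $k(\eta + s\delta_x)$; after relabeling, this is the second line. For $\hat{L}_2$ I apply Minlos~(2) twice: once forward on the outer sum $\sum_{x\in\tau(\eta)}$ to remove the $-s_x\delta_x + s\delta_y$ correction inside $G$, at the cost of an $(s',x')$-integration against $k$, and once in reverse on the remaining $(s, y)$-integration of $G(\eta + s\delta_y)$ to rewrite it as a sum $\sum_{y\in\tau(\eta)}$ under $\eta \mapsto \eta - s_y\delta_y$; this yields the third line. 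For $\hat{L}_3$ I keep the outer sum $\sum_{x\in\tau(\eta)}$ and apply Minlos~(2) only in reverse to the $(s,y)$-integration: since the newly-summed $y \in \tau(\eta)$ must differ from the already-present $x$, the restriction $y \in \tau(\eta)\setminus\{x\}$ appears naturally, giving the fourth line.

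The only real subtlety is bookkeeping: tracking whether Minlos~(2) is applied forward or in reverse for each piece, and how the exclusion constraint $\{x \neq y\}$ appears or disappears under each substitution; non-atomicity of $\nu \otimes \sigma$ ensures that coincidence sets are null, so no further correction is needed. Well-definedness of $L^\triangle k$ for $k \in \fin$ reduces to pointwise finiteness of finitely many integrals of the form $\int_\Y q^\pm(s_x, s)\,a^\pm(x - y)\,k^{(m)}(\ldots)\,\nu(ds)\sigma(dy)$, which is assured by $k^{(m)} \in L^\infty$, $a^\pm \in L^1 \cap L^\infty$, and the implicit integrability of $q^\pm$ against $\nu$ inherited from the hypotheses of the preceding theorem; the finiteness of both sides of \eqref{eq:duality} for $G \in \bbs$ then follows from the bounded support of $G$.
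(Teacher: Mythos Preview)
Your approach is correct and matches the paper's: the paper does not spell out a proof here, but the evident method---termwise application of the Minlos identity (Lemma~\ref{Minlos}(2)) to each of $\hat{L}_0,\dots,\hat{L}_3$, exactly as in the contact-model Proposition---is what you carry out, and your bookkeeping of forward versus reverse applications and of the $\{x\neq y\}$ constraint is accurate. The only loose end is your appeal to ``integrability of $q^\pm$ against $\nu$ inherited from the hypotheses of the preceding theorem'': the Proposition itself does not formally assume \eqref{betaminus}--\eqref{eq:smallbeta}, so for well-definedness you should either invoke those standing hypotheses explicitly or simply assume $\int q^\pm(\cdot,s)\,\nu(ds)\in L^\infty$, in parallel with \eqref{cm:cond}.
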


Hence, one can consider the dual semigroup $T^*(t)$ in the dual space to the space $\mathbf{L}_{\alpha,C}$, which is isomorphic to
\begin{equation}\label{KalphaC}
 \mathcal{K}_{\alpha,C}=L^\infty\bigl(\LK_0(\X),\mathbf{C}(\eta)\lambda_{\nu\otimes\sigma}(d\eta)\bigr).
\end{equation}
This semigroup is $*$-weakly differentiable (with respect to the duality \eqref{eq:duality}), and $k_t=T^*(t)k_0$ solves the weak equation \eqref{eqDualc} (see \cite{MR2863863} for details). Note that, for some $A,B>0$, 
\[
  \|T^*(t)\|=\|T(t)\|\leq A e^{Bt}, \qquad t>0.
\] 
Therefore, for $\lambda_{\nu\otimes\sigma}$-a.a. $\eta\in\LK_0(\X)$,
\begin{equation}\label{subPoisest}
 |k_t(\eta)|\leq A e^{Bt} C^{|\tau(\eta)|}e^{\alpha\sum_{x\in\tau(\eta)}s_x}.
\end{equation}
As we can see, comparing with the result \eqref{apriori1}, the strong mortality and competition rates prevent factorial growth of correlation functions in $n$. A further analysis of the classical solution to the strong equation \eqref{qfpe} can be done by using the sun-dual semigroup techniques, see \cite{MR2863863} for details.

\section{Glauber dynamics}

We consider now the Glauber-type dynamics. The corresponding analogue on the configuration spaces was studied in many papers, see e.g. \cite{MR1899231,MR2144229,MR2863863,MR2417815}. The generator of the Glauber dynamics is obtained from the Gibbs measure on the cone, which was constructed in \cite{MR3041709} as follows.  Let $\X=\LR^d$ and consider a pair potential
\begin{equation*}
 \phi:\X\times\X\to\LR
\end{equation*}
which satisfies the following two properties:
\begin{itemize}
 \item there exists $R>0$ such that
 \begin{equation*}
  \phi(x,y)=0\text{ if }|x-y|>R
 \end{equation*}
 (where $|\cdot|$ denotes the Euclidean norm on $\LR^d$);
 \item there exists $\delta>0$ such that
 \begin{equation*}
  \inf_{|x-y|\leq\delta}\phi(x,y)>2b_d c^d\sup_{x,y}\bigl\vert-\phi(x,y)\lor 0\bigr\vert,
 \end{equation*}
 where $b_d$ is the volume of a unit ball in $\LR^d$ and 
 $c=c_{d,\delta,R} := \sqrt{d}(1+R/\delta)$ (see \cite{MR3041709} for details).
\end{itemize}

Fix also a $\theta>0$. It was shown in \cite{MR3041709} that there exists a \term{tempered Gibbs measure} $\mu$ on $\LK(\LR^d)$ which fulfills the Dobrushin-Lanford-Ruelle equations
\begin{equation*}
 \int_{\LK(\LR^d)}\pi_\Delta(B\mid\eta)\mu(d\eta)=\mu(B)\text{ for any }\Delta\in\CB_c(\LR^d),
\end{equation*}
where $\pi_\Delta$ is the so-called \term{local specification} constructed by $\phi$ and $\theta$ (see \cite{MR3041709} for the precise definitions and further details). Heuristically, 
\[
\mu(d\eta) = \frac{1}{Z} \exp \Bigl( - \sum_{x,y\in\tau(\eta)}s_x s_y \phi(x,y) \Bigr)\CG_\theta(d\eta),
\]
where $Z$ is a normalizing factor. 
\begin{Proposition}[Georgii--Nguyen--Zessin identity, {\cite[Theorem 6.4]{MR3041709}}]
Let $\mu$ be a tempered Gibbs measure on $\LK(\X)$. Then, for any measurable function $F:\LR^d\times\LK(\LR^d)\to\LR_{+}$,
 \begin{align}
  \begin{split}\label{gnz}
   \int_{\LK(\LR^d)}&\int_{\LR^d}F(x,\eta)\eta(dx)\mu(d\eta)
   \\
   &=\int_{\LK(\LR^d)}\int_{\LR_{+}^*\times\LR^d}F(x,\eta+s\delta_x)e^{-\Phi((s,x);\eta)}s \nu_{\theta}(ds)\sigma(dx)\mu(d\eta),
  \end{split}
 \end{align}
 where, for $\eta:=(s_y,y)_{y\in\tau(\eta)}\in\LK(\LR^d)$,
 \begin{equation}\label{defPhi}
  \Phi\left((s,x);\eta\right):=2s\sum_{y\in\tau(\eta)}s_y\phi(x,y).
 \end{equation}
\end{Proposition}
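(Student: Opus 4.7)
The plan is to derive the identity in two steps: first establish the free Mecke-type identity for the Gamma measure $\CG_\theta$, then incorporate the Gibbsian weight via the DLR equations that characterize $\mu$. I would begin by recording the classical Mecke identity for the Poisson measure $\pi_{\nu_\theta\otimes\sigma}$ on $\Pi(\Y)$,
\begin{equation*}
\int_{\Pi(\Y)}\sum_{(s,x)\in\gamma}g\bigl((s,x),\gamma\bigr)\,\pi_{\nu_\theta\otimes\sigma}(d\gamma)
=\int_{\Pi(\Y)}\int_{\Y}g\bigl((s,x),\gamma+\delta_{(s,x)}\bigr)\,\nu_\theta(ds)\sigma(dx)\,\pi_{\nu_\theta\otimes\sigma}(d\gamma),
\end{equation*}
and transport it through the bijection $\CR$ from \eqref{defR}. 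Since $\int_{\LR^d}F(x,\eta)\,\eta(dx)=\sum_{y\in\tau(\eta)}s_y F(y,\eta)$, this yields the free case of \eqref{gnz} (i.e.\ $\phi\equiv 0$):
\begin{equation*}
\int_{\LK(\LR^d)}\int_{\LR^d}F(x,\eta)\,\eta(dx)\,\CG_\theta(d\eta)
=\int_{\LK(\LR^d)}\int_{\Y}F(x,\eta+s\delta_x)\,s\,\nu_\theta(ds)\sigma(dx)\,\CG_\theta(d\eta),
\end{equation*}
where the extra factor $s$ on the right is the mass of the inserted atom, coming from $\eta(dx)=\sum_y s_y\delta_y(dx)$.

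Next I would localize. For $\Delta\in\CB_c(\LR^d)$ and $F$ with $F(x,\eta)=0$ off $x\in\Delta$ and depending on $\eta$ only through its restriction to some bounded $\Lambda\supset\Delta$, the DLR equation rewrites the left-hand side as an integral of $\int_\Delta F(x,\zeta)\,\zeta(dx)$ against $\pi_\Delta(d\zeta\mid\eta)\,\mu(d\eta)$. The local specification $\pi_\Delta(\cdot\mid\eta)$ has an explicit density on $\LK(\Delta)$ with respect to the Gamma measure, proportional to $\exp(-E_\Delta(\zeta_\Delta\mid\eta_{\Delta^c}))$, where the conditional energy $E_\Delta$ collects both inside--inside and inside--outside pair interactions. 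Applying the free Mecke identity to this inner integral and inspecting how $E_\Delta$ transforms when an atom $s\delta_x$, $x\in\Delta$, is appended, the density picks up exactly the factor $e^{-\Phi((s,x);\,\zeta_\Delta+\eta_{\Delta^c})}$ with $\Phi$ as in \eqref{defPhi}; the factor $2$ reflects that the new point contributes twice in the symmetric pair sum. The partition function $Z_\Delta(\eta)$ is the same on both sides and cancels. Re-integrating through DLR produces the right-hand side of \eqref{gnz} for such local $F$.

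Finally, I would extend to arbitrary non-negative measurable $F$ by a standard monotone-class argument: approximate $F$ by $F_n(x,\eta):=\min\{F(x,\eta),n\}\mathbbm{1}_{\Delta_n}(x)$ further truncated to a bounded window of $\eta$, then let $n\to\infty$ and $\Delta_n\uparrow\LR^d$ using monotone convergence on both sides. The temperedness of $\mu$ (see \cite{MR3041709}) is crucial here: the finite range $R$ of $\phi$ reduces the sum defining $\Phi((s,x);\eta)$ to atoms of $\eta$ inside a ball $B_R(x)$, and tempered bounds on the local mass $\eta(B_R(x))$ ensure $\Phi$ is $\mu$-a.s.\ finite and the right-hand side is well-defined.

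The main obstacle is the explicit identification in the second step of the factor by which the density of $\pi_\Delta(\cdot\mid\eta)$ transforms upon insertion of an atom, together with the cancellation of the partition function after applying Mecke and re-integrating against $\mu$. This amounts to verifying consistency of the Gibbsian specifications used to define $\mu$; once that is in hand, the extension from local to general $F$ is routine monotone-class machinery.
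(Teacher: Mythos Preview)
The paper does not prove this Proposition at all: it is quoted verbatim from \cite[Theorem~6.4]{MR3041709} and used as a black box. Your outline---Mecke identity for the free Gamma/Poisson case, then insertion of the Gibbsian density via the DLR equations and the local specification $\pi_\Delta(\cdot\mid\eta)$, followed by a monotone-class extension---is the standard route and is essentially how the result is established in \cite{MR3041709}. So your approach is correct, but there is nothing in the present paper to compare it against; you have supplied a proof sketch where the authors chose to cite one.

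One small remark on your sketch: your explanation of the factor~$2$ in $\Phi$ is slightly imprecise. In the paper's heuristic density the Hamiltonian is written as $\sum_{x,y\in\tau(\eta)}s_xs_y\phi(x,y)$, a sum over \emph{ordered} pairs (including $x=y$, which is irrelevant for the increment). Adding an atom $s\delta_x$ to $\eta$ therefore changes the energy by $\sum_{y\in\tau(\eta)}\bigl(s\,s_y\phi(x,y)+s_y\,s\,\phi(y,x)\bigr)=2s\sum_{y\in\tau(\eta)}s_y\phi(x,y)$, using the symmetry of $\phi$. This is what produces \eqref{defPhi}; it is not that ``the new point contributes twice'' but that each unordered pair is counted twice in the ordered sum. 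This is cosmetic and does not affect the validity of your argument.
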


We consider, for the fixed $\phi$ and $\theta$, the following (pre-)Dirichlet form:
\begin{equation*}
 \mathcal{E}(F,G):= \int_{\LK(\X)}\int_{\X}D_x^-F(\eta)D_x^-G(\eta)\eta(dx)\CG_\theta(d\eta),
\end{equation*}
for $F,G \in K(\bbs)$.

\begin{Proposition}
Let $F,G\in K(\bbs)$ and
 \begin{align*}
  (LF)(\eta)&:=\sum_{x\in\tau(\eta)}s_x\left[F(\eta-s_x\delta_x)-F(\eta)\right]
  \\
  &\quad+\int_{\Y}\left[F(\eta+s_x\delta_x)-F(\eta)\right]e^{-\Phi\left((s,x);\eta)\right)}s \nu_\theta(ds)\sigma(dx)
 \end{align*}
 where $\Phi$ is defined by \eqref{defPhi}. Then
 \begin{equation*}
 \mathcal{E}(F,G)=-\int_{\LK(\X)} (LF)(\eta) \, G(\eta)\CG_\theta(d\eta).
 \end{equation*}
 
\end{Proposition}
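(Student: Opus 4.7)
The plan is to apply the Georgii--Nguyen--Zessin identity \eqref{gnz} twice, to express both $\mathcal{E}(F,G)$ and $-\int_{\LK(\X)}(LF)(\eta)G(\eta)\,\CG_\theta(d\eta)$ as the same integral over ``virtual'' atom insertions $\eta\mapsto\eta+s\delta_x$, and match them term-by-term.

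Since $\int_\X h(x)\,\eta(dx)=\sum_{x\in\tau(\eta)}s_xh(x)$, the Dirichlet form reads
\begin{equation*}
\mathcal{E}(F,G)=\int_{\LK(\X)}\sum_{x\in\tau(\eta)}s_x\,D_x^-F(\eta)\,D_x^-G(\eta)\,\CG_\theta(d\eta).
\end{equation*}
I apply \eqref{gnz} with the (signed) test function $(x,\eta)\mapsto D_x^-F(\eta)\,D_x^-G(\eta)$. Since, almost everywhere in $\CG_\theta\otimes\nu_\theta\otimes\sigma$, the inserted point $x$ is not already an atom of $\eta$, the mass of $\eta+s\delta_x$ at $x$ equals $s$, so $D_x^-F(\eta+s\delta_x)=F(\eta)-F(\eta+s\delta_x)$ and analogously for $G$. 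The two sign flips cancel, giving
\begin{equation*}
\mathcal{E}(F,G)=\int_{\LK(\X)}\int_{\Y}\bigl[F(\eta+s\delta_x)-F(\eta)\bigr]\bigl[G(\eta+s\delta_x)-G(\eta)\bigr]\,e^{-\Phi((s,x);\eta)}s\,\nu_\theta(ds)\sigma(dx)\CG_\theta(d\eta).
\end{equation*}

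Next, I expand $-(LF)(\eta)G(\eta)$ into its death and birth components and integrate against $\CG_\theta$. The birth piece contributes directly $-\int\int G(\eta)\bigl[F(\eta+s\delta_x)-F(\eta)\bigr]e^{-\Phi((s,x);\eta)}s\,\nu_\theta(ds)\sigma(dx)\CG_\theta(d\eta)$, already in the target form. To the death piece $\int\sum_{x\in\tau(\eta)}s_x\bigl[F(\eta)-F(\eta-s_x\delta_x)\bigr]G(\eta)\,\CG_\theta(d\eta)$ I apply \eqref{gnz} a second time; the same substitution $s_x\mapsto s$, $\eta\mapsto\eta+s\delta_x$ converts it into $\int\int\bigl[F(\eta+s\delta_x)-F(\eta)\bigr]G(\eta+s\delta_x)e^{-\Phi((s,x);\eta)}s\,\nu_\theta(ds)\sigma(dx)\CG_\theta(d\eta)$. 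Adding the two and using
\begin{equation*}
\bigl[F(\eta+s\delta_x)-F(\eta)\bigr]G(\eta+s\delta_x)-\bigl[F(\eta+s\delta_x)-F(\eta)\bigr]G(\eta)=\bigl[F(\eta+s\delta_x)-F(\eta)\bigr]\bigl[G(\eta+s\delta_x)-G(\eta)\bigr]
\end{equation*}
reproduces the expression just derived for $\mathcal{E}(F,G)$.

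The only non-routine step is that \eqref{gnz} is stated for non-negative integrands, while both applications above use signed test functions. This is handled by the standard split into positive and negative parts, which requires absolute integrability; since $F,G\in K(\bbs)$ are uniformly bounded and (by the proposition preceding \eqref{eq:intexp}) supported on configurations with bounded cardinality on some $\Lambda\in\CB_c(\X)$ and with masses in a compact interval $I\subset\LR_+^*$, applying \eqref{gnz} to $|F|,|G|$ against the finite-local-moment reference measure yields finite quantities and, via Fubini, legitimizes all the manipulations above.
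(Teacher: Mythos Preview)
Your proof is correct and uses essentially the same idea as the paper: the Georgii--Nguyen--Zessin identity \eqref{gnz} converts sums over existing atoms into integrals over virtual insertions, and the algebra $[F(\eta+s\delta_x)-F(\eta)][G(\eta+s\delta_x)-G(\eta)]$ does the rest. The only organizational difference is that the paper splits $D_x^-G(\eta)=G(\eta-s_x\delta_x)-G(\eta)$ first and applies \eqref{gnz} \emph{once} (to the term carrying $G(\eta-s_x\delta_x)$), arriving directly at $-\int (LF)\,G\,d\CG_\theta$; you instead apply \eqref{gnz} to the whole Dirichlet integrand and then a second time to the death part of $-\int(LF)\,G\,d\CG_\theta$, meeting in the middle at the symmetric expression $\int\!\!\int\Delta F\,\Delta G\,e^{-\Phi}s\,d\nu_\theta d\sigma d\CG_\theta$. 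Your route makes the symmetry $\mathcal{E}(F,G)=\mathcal{E}(G,F)$ manifest and, as a bonus, you address the passage from nonnegative to signed integrands in \eqref{gnz}, which the paper leaves implicit.
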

\begin{proof}
By \eqref{gnz}, we have
 \begin{align*}
  \mathcal{E}(F,G)&=\int_{\LK(\X)}\int_{\Y}D_x^-F(\eta)D_x^-G(\eta)\eta(dx)\CG_\theta(d\eta)
  \\
  &=\frac{1}{2}\int_{\LK(\X)}\int_{\Y}D_x^-F(\eta)(G(\eta-s_x\delta_x)-G(\eta))\eta(dx)\CG_\theta(d\eta)
  \\
  &=\int_{\LK(\X)}\int_{\Y}D_x^-F(\eta)(G(\eta-s_x\delta_x)\eta(dx)\CG_\theta(d\eta)
  \\
  &\quad-\int_{\LK(\X)}\int_{\Y}D_x^-F(\eta)G(\eta)\eta(dx)\CG_\theta(d\eta)
  \\
  &\stackrel{\mathmakebox[\widthof{=}]{\eqref{gnz}}}=\frac{1}{2}\int_{\LK(\X)}\int_{\Y}D_x^-F(\eta+s_x\delta_x)G(\eta)e^{-\Phi\left((s,x);\eta\right)}s\nu_\theta(ds)\sigma(dx)\CG_\theta(d\eta)
  \\
  &\quad-\int_{\LK(\X)}\int_{\Y}D_x^-F(\eta)G(\eta)\eta(dx)\CG_\theta(d\eta)
  \\
  &=-\int_{\LK(\X)}\int_{\Y}(F(\eta+s_x\delta_x)-F(\eta))G(\eta)e^{-\Phi\left((s,x);\eta\right)}s\nu_\theta(ds)\sigma(dx)\CG_\theta(d\eta)
  \\
  &\quad-\int_{\LK(\X)}\sum_{x\in\tau(\eta)}s_x(F(\eta-s_x\delta_x)-F(\eta))G(\eta)\CG_\theta(d\eta)
  \\
  &=-\int_{\LK(\X)} (LF)(\eta) \, G(\eta)\CG_\theta(d\eta).\qedhere
 \end{align*}
\end{proof}

We denote
\[
  S(\eta):= \sum_{x\in\tau(\eta)}s_x, \qquad \eta\in\LK_0(\X),
\]
and also, for the fixed $(s,x)\in\Y$, we set
\[
  f_{s,x}(\tau,y):=e^{-2s\tau\phi(x,y)}-1, \qquad (\tau,y)\in\Y.
\]

\begin{Proposition}
For any $G\in B_\mathrm{bs}(\LK_0(\X))$, $\hat{L}G:=K^{-1}LKG$ satisfies 
\begin{align*}
 (\hat{L}G)(\eta)&=-S(\eta)G(\eta)
 \\
 &\hspace{11pt}+\int_{\Y}s\sum_{\xi\subset\eta}G(\xi+s\delta_x)e^{-\Phi((s,x),\xi)}e_\lambda(f_{s,x},\eta-\xi)\nu_\theta(ds)\sigma(dx).
\end{align*}
 \end{Proposition}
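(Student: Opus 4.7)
The plan is to split $LF$ into a ``death'' part and a ``birth'' part and apply $K^{-1}$ to each separately. For the death part $\sum_{x\in\tau(\eta)}s_x[F(\eta-s_x\delta_x)-F(\eta)]$, I would reuse verbatim the first computation from the contact-model proposition (with the extra factor $s_x$ already built in): using $F(\eta-s_x\delta_x)-F(\eta)=-K(G(\cdot+s_x\delta_x))(\eta-s_x\delta_x)$ and the change of summation $\sum_{x\in\tau(\eta)}\sum_{\xi\Subset\eta-s_x\delta_x}(\cdot)=\sum_{\xi\Subset\eta}\sum_{x\in\tau(\xi)}(\cdot)$, this collapses to $-\sum_{\xi\Subset\eta} S(\xi)G(\xi)=-(K(SG))(\eta)$, so applying $K^{-1}$ yields exactly $-S(\eta)G(\eta)$.

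For the birth part, the crucial observation is that the weight $e^{-\Phi((s,x);\eta)}$ is itself a $K$-image. Indeed, with $f_{s,x}(\tau,y)=e^{-2s\tau\phi(x,y)}-1$ one has
\[
e^{-\Phi((s,x);\eta)}=\prod_{y\in\tau(\eta)}\bigl(1+f_{s,x}(s_y,y)\bigr)=\bigl(Ke_\lambda(f_{s,x},\cdot)\bigr)(\eta),
\]
by the formula for $K$ applied to a Lebesgue--Poisson exponent recalled in Section 2. Combining this with the identity $F(\eta+s\delta_x)-F(\eta)=K(G(\cdot+s\delta_x))(\eta)$ and the multiplicativity $K G_1\cdot KG_2=K(G_1\star G_2)$, the integrand of the birth part becomes $K\!\bigl(G(\cdot+s\delta_x)\star e_\lambda(f_{s,x},\cdot)\bigr)(\eta)$, i.e.\ already written as $K$ of something.

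I would then interchange $K^{-1}$ with $\int_{\Y}s\,\nu_\theta(ds)\sigma(dx)$; this is legitimate because $G\in\bbs$ together with the boundedness of $\phi$ on compacts makes everything absolutely convergent on each bounded set. After the interchange, the birth contribution to $\hat LG(\eta)$ is
\[
\int_{\Y} s\,\bigl(G(\cdot+s\delta_x)\star e_\lambda(f_{s,x},\cdot)\bigr)(\eta)\,\nu_\theta(ds)\sigma(dx).
\]
Unfolding the definition of $\star$, writing $\xi:=\xi_1+\xi_2$ and using multiplicativity $e_\lambda(f_{s,x},\xi_2)e_\lambda(f_{s,x},\xi_3)=e_\lambda(f_{s,x},\xi_2+\xi_3)$ for disjoint supports, I would collect the inner sum over $\xi_2\subset\xi$ into $\sum_{\xi_2\subset\xi}e_\lambda(f_{s,x},\xi_2)=(Ke_\lambda(f_{s,x},\cdot))(\xi)=e^{-\Phi((s,x);\xi)}$, thereby producing the claimed kernel
\[
\sum_{\xi\subset\eta}G(\xi+s\delta_x)\,e^{-\Phi((s,x);\xi)}\,e_\lambda(f_{s,x},\eta-\xi).
\]

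The only delicate steps are the bookkeeping in the $\star$-convolution expansion (organising the three-part decomposition as ``choose $\xi\subset\eta$, then $\xi_2\subset\xi$'') and the interchange of $K^{-1}$ with the $\nu_\theta\otimes\sigma$-integral; both are routine once one notes that $G$ has bounded support and $e^{-\Phi}\le 1$ on the positive-$\phi$ region while the negative-$\phi$ contribution is controlled by the finite-range assumption on $\phi$. No nontrivial analytic obstacle arises beyond these checks.
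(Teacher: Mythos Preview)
Your proposal is correct and follows precisely the standard route the paper points to (the contact-model computation for the death part, and the $K$-multiplicativity/$\star$-convolution identity for the birth part, as in \cite{MR2863863}). The bookkeeping you outline for the $\star$-expansion---setting $\xi=\xi_1+\xi_2$, pulling out $e_\lambda(f_{s,x},\eta-\xi)$, and summing $\sum_{\xi_2\subset\xi}e_\lambda(f_{s,x},\xi_2)=e^{-\Phi((s,x);\xi)}$---is exactly right; the interchange of $K^{-1}$ with the $(s,x)$-integral is in fact trivial here since $\eta\in\LK_0(\X)$ makes $K^{-1}$ a finite sum and $G\in\bbs$ confines $(s,x)$ to a set of finite $\nu_\theta\otimes\sigma$-measure, so no sign considerations on $\phi$ are needed.
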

\begin{proof}
 The proof can be done in the same way as e.g. in \cite{MR2863863}, see also the proof of \eqref{hatL}.
\end{proof}

We consider again the space \eqref{LalphaC} with $C>0$, $\alpha\in(0,1)$. We consider also 
the domain
\begin{equation*}
  \mathcal{D}:=\left\{G\in\mathbf{L}_{\alpha,C}\middle| S(\eta) G(\eta)\in\mathbf{L}_{\alpha,C}\right\}.
 \end{equation*}

\begin{Theorem}
Let $C> 2$, $\alpha\in(0,1)$, $\theta>0$, and
\begin{equation}\label{pos}
 \phi(x,y)\geq0, \qquad x,y\in\X,
 \end{equation}
be such that
\begin{equation}\label{smallparam}
\theta\cdot\sup_{x\in\X}\int_\X \phi(x,y)\sigma(dy) \leq \frac{\alpha(1- \alpha)}{2C}.
\end{equation}
Then the operator $(\hat{L},\mathcal{D})$ generates an analytic semigroup in the space $\mathbf{L}_{\alpha,C}$.
\end{Theorem}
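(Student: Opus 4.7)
The plan is to split $\hat{L}=\hat{L}_0+\hat{L}_1$, where $(\hat{L}_0G)(\eta):=-S(\eta)G(\eta)$ is the diagonal multiplication part and $\hat{L}_1$ is the Lebesgue--Poisson integral term, and to invoke the same perturbation theorem for analytic semigroup generators as in the proof for the Bolker--Dieckmann--Law--Pacala model: it suffices to check that $(\hat{L}_0,\mathcal{D})$ generates an analytic semigroup on $\mathbf{L}_{\alpha,C}$ and that $\hat{L}_1$ is $\hat{L}_0$-bounded with relative bound strictly less than $1/2$. The first point is routine because $\mathcal{D}$ is precisely the maximal domain of multiplication by the non-positive function $-S(\eta)$ on the weighted $L^1$-space $\mathbf{L}_{\alpha,C}$; such a real multiplication operator generates an analytic contraction semigroup given explicitly by $(T_0(t)G)(\eta)=e^{-tS(\eta)}G(\eta)$.

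For the relative bound, I would use the hypothesis $\phi\geq 0$ to get $e^{-\Phi((s,x),\xi)}\leq 1$, move the absolute value inside, and then apply part~(1) of the Minlos lemma together with the factorization $\IC(\eta)=\IC(\xi)\IC(\eta-\xi)$ (valid $\lambda_\theta\otimes\lambda_\theta$-a.s.\ since $\sigma$ is non-atomic) to decouple the sum $\sum_{\xi\subset\eta}$ into an iterated integral over $\LK_0(\X)\times\LK_0(\X)$. The integral over the second factor collapses by \eqref{eq:intexp}, since $|e_\lambda(f_{s,x},\zeta)|\IC(\zeta)=e_\lambda(C|f_{s,x}(\cdot,\cdot)|e^{\alpha\cdot},\zeta)$:
\[
\int_{\LK_0(\X)}|e_\lambda(f_{s,x},\zeta)|\IC(\zeta)\lambda_\theta(d\zeta)=\exp\Bigl(C\int_{\Y}\bigl(1-e^{-2s\tau\phi(x,y)}\bigr)e^{\alpha\tau}\nu_\theta(d\tau)\sigma(dy)\Bigr).
\]

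Using the elementary inequality $1-e^{-u}\leq u$ together with the explicit moment $\int_{\LR_+^*}\tau e^{\alpha\tau}\nu_\theta(d\tau)=\theta/(1-\alpha)$ (which requires $\alpha<1$), the exponent above is dominated by $\tfrac{2Cs\theta}{1-\alpha}\sup_{x\in\X}\int_{\X}\phi(x,y)\sigma(dy)\leq \alpha s$, the last step being exactly the assumption \eqref{smallparam}. Thus the inner integral is at most $e^{\alpha s}$, and applying part~(2) of the Minlos lemma in the reverse direction to the remaining expression (relabelling $\eta:=\xi+s\delta_x$ and using $\IC(\xi)e^{\alpha s}=\IC(\eta)/C$) yields
\[
\|\hat{L}_1G\|_{\alpha,C}\leq\frac{1}{C}\int_{\LK_0(\X)}S(\eta)|G(\eta)|\IC(\eta)\lambda_\theta(d\eta)=\frac{1}{C}\|\hat{L}_0G\|_{\alpha,C}.
\]
Since $C>2$ forces $1/C<1/2$, the perturbation theorem closes the argument. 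The main obstacle is calibrating the pointwise bound on $|f_{s,x}|$ so that the Lebesgue--Poisson exponent becomes exactly linear in $s$ and can be absorbed into $e^{\alpha s}$; the arithmetic form of \eqref{smallparam}, with factor $\alpha(1-\alpha)/(2C)$, encodes precisely the $\theta/(1-\alpha)$ moment of $\nu_\theta$ and the final $1/C$ reduction coming from the Minlos bookkeeping. Everything else is routine.
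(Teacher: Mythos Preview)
Your proposal is correct and follows essentially the same approach as the paper: the same splitting $\hat{L}=\hat{L}_0+\hat{L}_1$, the same appeal to the multiplication-operator argument for $\hat{L}_0$, and the same chain of Minlos identities, the bound $|f_{s,x}|\leq 2s\tau\phi(x,y)$, and the moment computation $\int_{\LR_+^*}\tau e^{\alpha\tau}\nu_\theta(d\tau)=\theta/(1-\alpha)$ to obtain the relative bound $1/C$. The only cosmetic difference is the order in which you invoke Minlos part~(2) and evaluate the $\zeta$-integral via \eqref{eq:intexp}; the paper swaps these two steps, but since the integrals are decoupled this is immaterial.
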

\begin{proof}
Firstly, we consider the operator
\[
  (\hat{L}_0G)(\eta)=-S(\eta)G(\eta), \qquad \eta\in\LK_0(\X)
\]
in $\mathbf{L}_{\alpha,C}$ with its maximal domain $\mathcal{D}$. It can be shown identically to the proof of \cite[Lemma~3.3]{MR2863863}, that  $\bigl(\hat{L}_0,\mathcal{D}\bigr)$ is a generator of a contraction analytic semigroup in $\mathbf{L}_{\alpha,C}$.

Next, we define $\hat{L}_1:=\hat{L}-\hat{L}_0$, i.e., for $G\in\bbs$,
\begin{align*}
 (\hat{L}_1G)(\eta)&=\int_{\Y}s\sum_{\xi\subset\eta}G(\xi+s\delta_x)e^{-\Phi((s,x),\xi)} e_\lambda(f_{s,x},\eta-\xi)\nu_\theta(ds)\sigma(dx).
\end{align*}

We are going to show now that, under \eqref{pos}, operator $\hat{L}_1$ is $\hat{L}_0$-bounded. Indeed, 
 \begin{align*}
  &\|\hat{L}_1G\|_{\alpha,C}
  \\
  &
  \stackrel{\mathmakebox[\widthof{=}]{{\eqref{pos}}}}\leq
\int_{\LK_0(\X)}\sum_{\xi\subset\eta}\int_{\Y}s |G(\xi+s\delta_x)|  e_\lambda\left(\left|f_{s,x}\right|,\eta-\xi\right)\IC(\eta)\nu_\theta(ds)\sigma(dx)\lambda_\theta(d\eta)
  \\
  &\stackrel{\mathmakebox[\widthof{=}]{{\eqref{Minlos}}}}=\int_{\LK_0(\X)}\int_{\LK_0(\X)}\int_{\Y}s |G(\xi_1+s\delta_x)|  e_\lambda\left(\left|f_{s,x}\right|,\xi_2\right)\IC(\xi_1+\xi_2)\nu_\theta(ds)\sigma(dx)\lambda_\theta(d\xi_1)\lambda_\theta(d\xi_2)
  \\
  &\stackrel{\mathmakebox[\widthof{=}]{{\eqref{Minlos}}}}=\int_{\LK_0(\X)}\int_{\LK_0(\X)}\sum_{x\in\tau(\xi_1)}s_x |G(\xi_1)| e_\lambda\left(\left|f_{s_x,x}\right|,\xi_2\right)\IC(\xi_1-s_x\delta_x+\xi_2)\lambda_\theta(d\xi_1)\lambda_\theta(d\xi_2)
  \\
  &\leq\int_{\LK_0(\X)}|G(\xi_1)|\IC(\xi_1)\sum_{x\in\tau(\xi_1)}\int_{\LK_0(\X)}s_x e_\lambda\left(\left|f_{s_x,x}\right|,\xi_2\right)\IC(\xi_2)\IC(s_x\delta_x)^{-1}\lambda_\theta(d\xi_1)\lambda_\theta(d\xi_2)
  \\
  &=C^{-1}\int_{\LK_0(\X)}|G(\xi_1)|\IC(\xi_1)\sum_{x\in\tau(\xi_1)}s_xe^{-\alpha s_x} \int_{\LK_0(\X)}e_\lambda\left(\left|f_{s_x,x}\right|Ce^{\alpha s_\cdot},\xi_2\right)\lambda_\theta(d\xi_2)\lambda_\theta(d\xi_1)
  \\
  &=C^{-1}\int_{\LK_0(\X)}|G(\xi_1)|\IC(\xi_1)\sum_{x\in\tau(\xi_1)}s_xe^{-\alpha s_x} \exp\left(\int_{\Y}\left|f_{s_x,x}(s,y)\right|Ce^{\alpha s}\nu_\theta(ds)\sigma(dy)\right)\lambda_\theta(d\xi_1)
  \\
  \intertext{where we have used \eqref{eq:intexp}; next, since, under \eqref{pos}, $|f_{s,x}(\tau,y)|\leq 2s\tau\phi(x,y)$, we may estimate}
  &\leq C^{-1}\int_{\LK_0(\X)}|G(\xi_1)|\IC(\xi_1)\sum_{x\in\tau(\xi_1)}s_xe^{-\alpha s_x} \exp\left(\int_{\Y} 2s_xs\phi(x,y)Ce^{\alpha s}\nu_\theta(ds)\sigma(dx)\right)\lambda_\theta(d\xi_1)
  \\
  &\stackrel{\mathmakebox[\widthof{=}]{\eqref{gamma}}}=C^{-1}\int_{\LK_0(\X)}|G(\xi_1)|\sum_{x\in\tau(\xi_1)}s_x \exp\left[\left(2C\theta \int_{\Y}\phi(x,y)e^{(\alpha-1) s}ds\sigma(dx)-\alpha\right)s_x\right] \IC(\xi_1)\lambda_\theta(d\xi_1)
  \\
  &\leq\frac{1}{C}\|\hat{L}_0 G\|,
 \end{align*}
for $\alpha\in(0,1)$, if only we assume that
 \begin{align*}
  0&\geq2C\theta\int_{\Y}\phi(x,y)e^{(\alpha-1)s}ds\sigma(dx)-\alpha
  =2C\theta\int_{\X}\phi(x,y)\sigma(dx)\int_{\LR_+^*}e^{(\alpha-1)s}ds-\alpha
  \\
  &=\frac{2C}{1-\alpha}\theta\int_{\LR^d}\phi(x,y)\sigma(dx)-\alpha,
 \end{align*}
 that holds under \eqref{smallparam}. 
 Therefore, $\hat{L}_1$ has $\hat{L}_0$-bound $\frac{1}{C}<\frac12$ that yields the statement. 
\end{proof}

By using the Minlos identity \eqref{Minlos}, we immediately get the following result:
\begin{Proposition}
 For any $k\in\fin$, the mapping
 \begin{align*}
  \left(L^\triangle k\right)(\eta)
  &=-S(\eta) k(\eta)\\&\quad+
\sum_{x\in\tau(\eta)}s_x e^{-\Phi((s,x),\eta-s_x\delta_x)}
  \int_{\LK_0(\X)}   e_\lambda(f_{s_x,x},\xi)k(\eta+\xi-s_x\delta_x)\lambda_\theta(d\xi)
 \end{align*}
is well-defined and, for any $G\in\bbs$, \eqref{eq:duality} holds.
\end{Proposition}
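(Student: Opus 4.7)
The plan is to verify both the well-definedness of $L^\triangle k$ and the duality identity \eqref{eq:duality} by a direct computation, using the explicit formula for $\hat{L}G$ from the preceding Proposition together with two successive applications of the Minlos lemma (Lemma~\ref{Minlos}), exactly in the spirit of the analogous dualisations carried out in Sections~3 and~4.

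First I would start from $\int_{\LK_0(\X)}(\hat{L}G)(\eta)k(\eta)\lambda_\theta(d\eta)$ and split $\hat{L}G$ into its two summands. The diagonal term $-S(\eta)G(\eta)$ immediately contributes $\int_{\LK_0(\X)} G(\eta)\bigl(-S(\eta)k(\eta)\bigr)\lambda_\theta(d\eta)$, matching the first summand on the right-hand side of the claimed formula for $L^\triangle k$. For the remaining off-diagonal piece
\[
\int_{\LK_0(\X)} k(\eta) \int_{\Y} s \sum_{\xi\subset\eta} G(\xi+s\delta_x)\, e^{-\Phi((s,x),\xi)}\, e_\lambda(f_{s,x},\eta-\xi)\,\nu_\theta(ds)\,\sigma(dx)\,\lambda_\theta(d\eta),
\]
I would apply part~(1) of Lemma~\ref{Minlos} to convert $\sum_{\xi\subset\eta}$ into an iterated integral over two independent copies $\xi_1,\xi_2\in\LK_0(\X)$ against $\lambda_\theta\otimes\lambda_\theta$, with $k(\eta)$ becoming $k(\xi_1+\xi_2)$.

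The resulting integrand depends on $\xi_1$ only through the two pieces $\xi_1+s\delta_x$ (inside $G$) and $\xi_1$ itself (inside $\Phi$ and $k$). I would then invoke part~(2) of Lemma~\ref{Minlos} in the reverse direction to absorb the single atom $s\delta_x$ into $\xi_1$: the substitution $\eta:=\xi_1+s\delta_x$ sends $\int_\Y\int_{\LK_0(\X)}\cdots\nu_\theta(ds)\sigma(dx)\lambda_\theta(d\xi_1)$ into $\int_{\LK_0(\X)}\sum_{x\in\tau(\eta)}\cdots\lambda_\theta(d\eta)$, with $s$ becoming $s_x$ and $\xi_1$ becoming $\eta-s_x\delta_x$. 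After this step, $G(\xi_1+s\delta_x)$ becomes $G(\eta)$, the factor $e^{-\Phi((s,x),\xi_1)}$ becomes $e^{-\Phi((s_x,x),\eta-s_x\delta_x)}$, and $k(\xi_1+\xi_2)$ becomes $k(\eta-s_x\delta_x+\xi_2)$. Relabelling $\xi_2$ as $\xi$, pulling $G(\eta)$ outside, and performing the $\xi$-integration against $e_\lambda(f_{s_x,x},\xi)$ produces precisely the second summand in the claimed formula, establishing \eqref{eq:duality}.

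For well-definedness of $(L^\triangle k)(\eta)$ at a fixed $\eta\in\LK_0(\X)$, the finite-sum term $-S(\eta)k(\eta)$ is clear. For the integral term, the compact spatial support of $\phi$ and the bound $|f_{s,x}(\tau,y)|\leq \min\{2s\tau|\phi(x,y)|,1\}$ ensure that $|f_{s_x,x}|\in L^1(\Y,d\nu_\theta\,d\sigma)$ for each $(s_x,x)\in\tau(\eta)$; combined with $k\in\fin$ (so that each $k^{(m)}$ is bounded), the Lebesgue--Poisson estimate analogous to \eqref{eq:intexp} yields absolute convergence of the inner $\lambda_\theta$-integral. The main obstacle is the careful bookkeeping through the two Minlos substitutions—in particular, tracking the passage $(s,x)\mapsto(s_x,x)$ together with the simultaneous removal of $s_x\delta_x$ from $\eta$ in both $\Phi$ and the argument of $k$; once this bookkeeping is set up as above, the identity drops out as in the contact and BDLP cases.
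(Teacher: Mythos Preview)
Your proposal is correct and follows exactly the route the paper takes: the paper's own proof is the single sentence ``By using the Minlos identity \eqref{Minlos}, we immediately get the following result,'' and what you have written is precisely the spelled-out version of that, first applying part~(1) of Lemma~\ref{Minlos} to open the sum $\sum_{\xi\subset\eta}$ and then part~(2) in reverse to absorb the atom $s\delta_x$ into $\xi_1$. Your bookkeeping of the substitutions and your well-definedness argument are both sound.
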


Again, one can consider the dual semigroup $T^*(t)$ in the space (isomorphic to) $\mathcal{K}_{\alpha,C}$ given by \eqref{KalphaC}, so that $k_t=T^*(t)k_0$ solves the weak equation \eqref{eqDualc}, and \eqref{subPoisest} holds.  Further analysis of the sun-dual semigroup $T^\odot(t)$ (which provides a solution to \eqref{eqDualc} on a subspace of $\mathcal{K}_{\alpha,C}$) can be done in the same way as in \cite{MR2863863}. 

\section*{Acknowledgements}
P.K. was supported by the DFG through the IRTG 2235 ``Searching for the regular in the irregular: Analysis of singular and random systems''.

\end{document}